\renewcommand\thefigure{\thesection.\@arabic\c@figure}
\renewcommand\thetable{\thesection.\@arabic\c@table}
\newtheorem{theorem}{Theorem}[section]
\newtheorem{lemma}[theorem]{Lemma}
\newtheorem{proposition}[theorem]{Proposition}
\newtheorem{remark}[theorem]{Remark}
 \theoremstyle{remark}
\newcommand{\mc}[1]{{\mathcal #1}}
\newcommand{\bb}[1]{{\mathbb #1}}
\newcommand{\<}{\langle}
\renewcommand{\>}{\rangle}
\DeclareMathOperator{\cls}{cls}
\title[Hydrodynamics on a fractal]{Hydrodynamic limit for a zero-range
process in the Sierpinski gasket}
\begin{document}

\begin{abstract}

We consider a system of random walks on graph approximations of the Sierpinski gasket, coupled by a zero-range interaction. We prove that the hydrodynamic limit of this system is given by a nonlinear heat equation on the Sierpinski gasekt. 
\end{abstract}

\subjclass{60K35,(28A80,35K55)}

\renewcommand{\subjclassname}{\textup{2000} Mathematics Subject Classification}

\keywords{Zero-range, Sierpinski gasket, nonlinear heat equation} 
\author{Milton Jara}
\address{FYMA, Universit\'e Catholique de Louvain, chemin du Cyclotron 2, B-1348 Louvain-la-Neuve, Belgium} 
\email{milton.jara@uclouvain.be}
\date{}

\maketitle

\section{Introduction}

The Sierpinski gasekt is a fractal in $\bb R^2$ constructed in the following way. Start with an equilateral triangle of side 1. Divide it into 4 equilateral triangles of side 1/2, and retire the central triangle. Repeat this procedure on each of the three remaining triangles. After $n-1$ steps, we are left with $3^n$ small triangles of side $1/2^n$. Since this sequence of triangles is decreasing, and each element in the sequence is compact, there is a non-viod limiting set $K$. We call this set $K$ the {\em Sierpinski gasket}. 

At each step of this construction, consider the boundary of the resulting set as a graph $\Gamma_n$ in $\bb R^2$, with $3(3^n+1)/2$ vertices and $3^{n+1}$ bonds. 
Consider now a system of particles evolving on this graph. The particles are attempting jumps to neighboring sites at rates that depend only on the number of particles sharing the same site. This is the so-called {zero-range} process. Such a system has been extensively studied in the usual lattice $\bb Z^d$ or its periodic version (see \cite{KL} and the references therein). The purpose of this article is to study the collective behavior of this system as the graph get finer and finer, approximating in this way the Sierpinski gasket $K$. More precisely, we are interested in the {\em hydrodynamic limit} of the model, that is, the macroscopic evolution of the density of particles. It turns out that the hydrodynamic limit for this model is given by a nonlinear heat equation of the form $\partial_t u = \Delta \phi(u)$, where $\Delta$ is the Laplacian defined in $K$ and $\phi$ depends on the particular form of the interaction between particles. In a private communication, we have obtained Gaussian fluctuations for the hydrodynamic limit, in the stationary situation.

At first sight, this result does not appear surprising, since the hydrodynamic equation for the zero-range process in the usual lattice $\bb Z^d$ is also given by teh same nonlinear heat equation. However, the Sierpinski gasket $K$ is a fractal, and therefore it does not have a differentiable structure. By this reason, even the definition of the Laplacian $\Delta$ in $K$ is subtle. 
In \cite{BP}, Barlow and Perkins have defined a Brownian motion in $K$ as the scaling limit of the simple random walks defined in $\Gamma_n$. As a by-product, they have defined the Laplacian $\Delta$ in $K$ as the generator of this Brownian motion. A purely analytical definition can be found in the book by Kigami \cite{Kig}. One remarkable fact is that the scaling limit is {\em subdiffusive}, in the sense that the scaling factor, which is equal to $5^n$ in this case, is larger than the square of the mesh of the graph, which is $2^n$. This fact leads to anomalous diffusion properties of the corresponding heat equation, and to the failure of the usual Gaussian estimates on the decay to equilibrium of solutions of the heat equation. 

Although very detailed information about the fundamental solutions of $\partial_t u = \Delta u$ in $K$ has been obtained \cite{Kig2}, up to our knowledge existence and uniqueness of the Cauchy problem for the hydrodynamic equation $\partial_t u= \Delta \phi(u)$ have not been considered in the literature. We provide in this article  the existence and uniqueness results for the Cauchy problem in $K$ that we need to make sense of the  hydrodynamic limit for the zero-range process. Although collateral to our work on the zero-range process, these results could be of independent interest.

From the point of view of interacting particle systems, the zero-range process is an example of a process satisfying the {\em gradient condition}. Roughly speaking, the current of particles is the gradient of another local function (the interaction rate, in this case). Therefore, Fick's law is satisfied at a microscopic level. The classical method to deal with the hydrodynamic limit of gradient systems was introduced in \cite{GPV} and it is based on the so-called {\em one-block} and {\em two-blocks} estimates. Unfortunatedly, the two-blocks estimate does not seem to hold for the Sierpinski gasket. In fact, this two-blocks estimate is based on the {\em moving particle lemma}, which states that a particle can be moved from one site to another with a diffusive entropy cost. The Laplacian $\Delta$ satisfies a Poincar\'e inequality in $K$, and therefore, the spectral gap for the associated particle system is expected to be of the right order. However, in the Sierpinski gasket there are {\em hot spots}, that is, sites of the graph that have to be visited in order to connect two different regions of the graph. In particular, it is not true that the best strategy in order to transport a particle from one site to another is just to follow the shortest path, in contrast with the situation in the integer lattice.

Probably the simplest method to prove hydrodynamic limits of gradient systems is the $H_{-1}$-norm method, due to Chang and Yau \cite{CY} (see \cite{GQ} for a more readable exposition). The advantage of this method is that only requires the one-block estimate; its main drawback is that it works only for diffusive systems without a drift term. The idea of this method is very simple: let $u$, $v$ be two solutions of the hydrodynamic equation. Then, the time derivative of $\<u-v,(-\Delta)^{-1}(u-v)\>$ is equal to $-2\<u-v,\phi(u)-\phi(v)\>$ and in particular, it is decreasing. Therefore, if $u_0=v_0$, then $u_t=v_t$ for any $t>0$. The point is that this relation holds also at a microscopic level. However, at the microscopic level and for non-transitive graphs, there is a correction term involving $\Delta \mc G(x,x)$, where $\mc G$ is the Green function associated to $\Delta$. For the Laplacian in subsets of $\bb R^d$, $\mc G(x,x)$ may not be well defined, but anyway $\Delta G(x,x)$ can be defined, since the singularity of $\mc G(x,x)$ at the diagonal is always of the same magnitude, and cancels, at least in a weak sense, when taking the Laplacian as the limit of averaged differences around $x$. This is not the situation in the Sierpinski gasket $K$. We will see that, despite the fact that the Green function is continuous in $K \times K$, the function $\mc G(x) = \mc G(x,x)$ satisfies $\Delta \mc G(x) =+\infty$, in a sense to be precised later. 
This nonregularity of the Green function poses an extra difficulty to the proof of the hydrodynamic limit. Usually, the one-block estimate allows us to replace a local function of the number of particles by averages over small boxes, {\em when averaged with respect to a continuous test function}. In our case, we need to average it with respect to discrete approximations of $\Delta \mc G(x)$, with we have seen that is not continuous. Therefore, the one-block estimate needs to be proved without averaging with respect to test functions. This has been accomplished only recently \cite{JLS}. The proof however, confines us to dimension $d<2$. Fortunately, the Sierpinski gasket has Hausdorff dimension $d_H=\log 3 /\log 2 <2$, and the so-called {\em local} one-block estimate holds in our case.

This paper is organized as follows. In Section \ref{s1}, we define the Laplacian operator in $K$ and we review some aspects of functional analysis on $K$ that will be needed in the sequel. In particular, we define with some detail the Green function $\mc G(x,y)$ associated to the Dirichlet Laplacian in $K$. Although most of the material of this section has been taken from \cite{Kig}, \cite{Kig2}, we have decided to include most of the proofs for the convenience of readers interested in interacting particle systems, not familiar with analysis on fractals.

In Section \ref{s2} we define what we understand by a weak solution of the hydrodynamic equation $\partial_t u =\Delta \phi(u)$ and we prove existence and uniqueness of such solutions by considering a finite-difference numerical scheme to approximate those solutions.

In Section \ref{s3} we introduce the zero-range process, the $H_{-1}$-norm method and we prove the hydrodynamic limit for the zero-range process, relying on the one-block estimate and suitable properties of the Green function $\mc G(x,y)$. In Section \ref{s4} we prove the one-block estimate and the lemmas needed for the derivation of the hydrodynamic limit.
In the Appendix we study the behavior of $\Delta_n \mc G(x,x)$, where $\Delta_n$ corresponds to the discrete approximation of $\Delta$ defined in $\Gamma_n$. The results presented in the Appendix are not needed for the hydrodynamic limit. Ee have included them here to stress that the example of the Sierpinski gasket is the worst possible case for the derivation of the hydrodynamic limit with the $H_{-1}$-norm method. We believe that the ideas presented here can be adapted to treat general non-homogeneous graphs, like random graphs and trees, percolation clusters and disordered lattices. 

We have organized the paper in such a way that Sections \ref{s2} and \ref{s3} are independent. Therefore, the readers interested in the zero-range process can take for granted the existence and uniqueness results for the hydrodynamic equation and jump directly from Section \ref{s1} to Section \ref{s3}.

\section{The Sierpinski gasket}
\label{s1}
Let $a_0 =(0,0)$, $a_1=(1/2,\sqrt 3/2)$, $a_2=(1,0)$ be the vertices of an
equilateral triangle of unit side in $\bb R^2$. Define $\varphi_i: \bb R^2 \to
\bb R^2$ by taking $\varphi_i(x) = (x+a_i)/2$, $i=0,1,2$. The Sierpinski gasket $K$
is defined as the unique non-empty compact subset $K$ of $\bb R^2$ such
that
\[
K = \bigcup_{i=0,1,2} \varphi_i(K).
\]

A constructive definition of $K$ is the following. Define $V_n \subseteq \bb
R^2$ recursively by taking $V_0=\{a_0,a_1,a_2\}$, $V_{n+1} = \cup_{i} \varphi_i(V_n)$.
Define $V^* = \cup_{n} V_n$. Since $V_n \subseteq V_{n+1}$, it is not hard to
see that $K = \cls(V^*)$, the closure of $V^*$ under the usual topology of $\bb
R^2$. 

Consider $V_n$ as the set of vertices of a non-oriented graph $\Gamma_n =
(E_n,V_n)$, and define inductively the set of bonds of $\Gamma_n$ by taking $E_0 = \{\<a_0a_1\>,\<a_1a_2\>,\<a_2a_0\>\}$ and
\[
E_{n+1} =\{\<\varphi_i(x)\varphi_i(y)\>; \<xy\> \in E_n, i=0,1,2\}.
\]

We say that $\Gamma_n$ is the $n$-th discrete approximation of $K$. For $x, y
\in V_n$, we say that $x \sim_n y$ if $\<xy\> \in E_n$. We simply write $x\sim
y$ when there is no risk of confusion. We say that $x$ and $y$ are {\em
neighbors} in that case.

\subsection{The Laplacian operator in $K$}

Let $u: V^* \to \bb R$ be an arbitrary function. For each $n \geq 0$, we define
\[
\mc E_n(u,u) = (5/3)^n \sum_{x \sim_n y} \big( u(y)-u(x)\big)^2.
\]

\begin{proposition}
For each $n \geq 0$ and each $u: V^* \to \bb R$, 
\[
\mc E_{n+1}(u,u) \geq \mc E_n(u,u).
\]

Moreover, given $n \geq 0$, $u: V^* \to \bb R$, there exists a unique function
$\bar u_n: V^* \to \bb R$ such that
\[
\mc E_n(u,u) = \mc E_{n+p}(\bar u_n, \bar u_n) \text{ for all }  p \geq 0
\]
and $\bar u_n(x)=u(x)$ for every $x \in V_n$.

\label{p2}
\end{proposition}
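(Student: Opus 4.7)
The plan is to reduce the statement to a three-variable quadratic minimization on a single $n$-cell and then sum and iterate. Both $\mc E_n(u,u)$ and $\mc E_{n+1}(u,u)$ are naturally sums over the $3^n$ small triangles of level $n$; moreover, each new midpoint vertex of $V_{n+1}\setminus V_n$ belongs to a single $n$-cell, because adjacent $n$-cells in the Sierpinski construction meet only at a common corner, so all bonds of $\Gamma_{n+1}$ incident to such a midpoint lie inside one $n$-cell. Consequently, minimizing over extensions of $u|_{V_n}$ to $V_{n+1}$ decouples into $3^n$ independent local problems.

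Within one $n$-cell with corner values $a_0,a_1,a_2$ at $p_0,p_1,p_2$, I would label $q_i$ the midpoint opposite to $p_i$ and let $v_i$ be the unknown value at $q_i$. The nine bonds of $\Gamma_{n+1}$ inside the cell contribute the raw energy
\[
F(v_0,v_1,v_2)=\sum_{0\leq i<j\leq 2}(v_i-v_j)^2+\sum_{i=0}^{2}\sum_{j\neq i}(a_i-v_j)^2 .
\]
Setting $\nabla F=0$ gives the linear system $4v_i=v_j+v_k+a_j+a_k$ for $\{i,j,k\}=\{0,1,2\}$; summing the three equations yields $v_0+v_1+v_2=a_0+a_1+a_2$, and back-substitution produces the $\tfrac15$-$\tfrac25$-$\tfrac25$ harmonic-extension rule $v_i=\tfrac15(a_i+2a_j+2a_k)$. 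The Hessian of $F$ equals $10I-2J$ with $J$ the $3\times 3$ all-ones matrix, whose eigenvalues are $4$ and $10$; so $F$ is strictly convex and this minimizer is unique. A direct substitution gives the key identity
\[
F_{\min}=\tfrac35\bigl[(a_0-a_1)^2+(a_1-a_2)^2+(a_2-a_0)^2\bigr],
\]
so after multiplying by $(5/3)^{n+1}$ the minimum local contribution at level $n+1$ matches the level-$n$ local contribution. Summing over cells yields $\mc E_{n+1}(u,u)\geq \mc E_n(u,u)$, with equality iff $u|_{V_{n+1}}$ is the harmonic extension of $u|_{V_n}$ in every $n$-cell.

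For the second part, I would define $\bar u_n$ by iterated harmonic extension: having built $\bar u_n$ on $V_{n+p}$, extend to $V_{n+p+1}$ by the $\tfrac15$-$\tfrac25$-$\tfrac25$ rule in every $(n+p)$-cell. By the equality case above applied at each level, $\mc E_{n+p+1}(\bar u_n,\bar u_n)=\mc E_{n+p}(\bar u_n,\bar u_n)$, and telescoping gives $\mc E_{n+p}(\bar u_n,\bar u_n)=\mc E_n(u,u)$ for all $p\geq 0$. Uniqueness is immediate from strict convexity: any $w$ satisfying the equality chain must attain the local minimum at every refinement step, hence coincide with $\bar u_n$ on each $V_{n+p}$, and so on $V^*$.

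The only substantive step is the local linear algebra that produces the $\tfrac35$ identity; everything else is summation and induction. The factor $\tfrac53$ in the definition of $\mc E_n$ is precisely engineered to absorb this ratio, which is why the form is renormalization-invariant along the harmonic extensions.
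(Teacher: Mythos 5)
Your proof is correct and follows essentially the same route as the paper: the identical local three-point quadratic minimization (yielding the $\tfrac15$--$\tfrac25$--$\tfrac25$ harmonic extension rule and strict convexity for uniqueness), followed by decoupling over the $n$-cells and induction over levels, which is exactly what the paper compresses into its function $I(\alpha,\beta)$ and the phrase ``chaining argument.'' Note that your value $F_{\min}=\tfrac35\bigl[(a_0-a_1)^2+(a_1-a_2)^2+(a_2-a_0)^2\bigr]$ is the correct one --- the factor $5/3$ displayed in the paper's proof of Proposition \ref{p2} is evidently a typo for $3/5$, since the renormalization $(5/3)^{n+1}\cdot\tfrac35=(5/3)^n$ is what makes $\mc E_{n+1}(\bar u_n,\bar u_n)=\mc E_n(u,u)$.
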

\begin{proof}
Let us consider the following setup. For ${\bf \alpha} =
(\alpha_0,\alpha_1,\alpha_2)$, ${\bf \beta}= (\beta_0,\beta_1,\beta_2)$, define
the function 
\[
I({\bf \alpha},{\bf \beta}) = \sum_{i \neq j} \Big[ (\alpha_i-\beta_j)^2
+\frac{1}{2} (\beta_i-\beta_j)^2 \Big].
\] 

Let ${\bf \alpha} \in \bb R^3$ be fixed. A simple computation shows that
\[
\inf_{{\bf \beta} \in \bb R^3} I({\bf \alpha},{\bf \beta}) =
5/3\big[(\alpha_0-\alpha_1)^2+(\alpha_1-\alpha_2)^2+(\alpha_2-\alpha_0)^2\big],
\]
and the infimum is attained in a single point ${\bf \beta}$, with $\beta_i
=(2\sigma -\alpha_i)/5$, where $\sigma = \alpha_0+\alpha_1+\alpha_2$. The proof
follows easily from this observation and a chaining argument.
\end{proof}

From the previous result, $\mc E(u,u) = \lim_n \mc E_n(u,u)$ is always well
defined, although maybe infinite. Observe that $\# V_n = 3(3^n+1)/2$. In
particular, $\lim_n \# V_n / 3^n = 3/2$. This motivates the following
definition. For each $n \geq 0$ define the positive measure $\mu_n$ in $K$ by
\[
\mu_n(dx) = \frac{1}{3^n} \sum_{x \in V_n} \delta_x(dx),
\]
where $\delta_x(dx)$ is the Dirac mass at $x$. It is not hard to see that
$\mu_n$ converges in the vague topology to a measure $\mu$ in $K$ that coincides
with the Hausdorff measure in $K$. A simple summation by parts shows that for
any $u: V^* \to \bb R$,
\[
\mc E_n(u,u) = - \int u(x) \Delta_n u(x) \mu_n(dx),
\]
where $\Delta_n$ is the {\em discrete Laplacian} in $V_n$:
\[
\Delta_n u(x) = 5^n \sum_{y \in V_n: y \sim_n x} \big[ u(y)-u(x)\big].
\]

\begin{proposition}
\label{p1}
There exists a universal constant $c>0$ such that for any $u: V^* \to \bb R$
with $\mc E(u,u) < +\infty$ we have
\[
\sup_{\substack{x,y \in V^*\\x \neq y}} \frac{|u(y)-u(x)|}{|y-x|^\alpha} \leq
c\mc E(u,u)^{1/2},
\]
where $\alpha= \log(5/3)/2\log 2$.
\end{proposition}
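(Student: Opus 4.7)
The plan is to exploit the self-similarity of $K$ through a multi-scale chaining argument, with the exponent $\alpha$ arising from balancing the energy contribution $(3/5)^{n/2}$ at scale $2^{-n}$ against the required Hölder power $|y-x|^{\alpha}$. The starting point is a single-edge estimate: for any level $k$ and any edge $x \sim_k y$ of $\Gamma_k$, dropping the remaining non-negative terms in the definition of $\mc E_k$ and invoking Proposition \ref{p2} yields
\[
\bigl(u(y)-u(x)\bigr)^2 \leq (3/5)^k\,\mc E_k(u,u) \leq (3/5)^k\,\mc E(u,u).
\]

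Next I would establish a vertical chain bound for a single point. For $x \in V^*$ there exists $m$ with $x \in V_m$; for each $k \in \{n,\dots,m\}$ pick $v_k \in V_k$ a vertex of the level-$k$ cell (one of the $3^k$ closed subtriangles of side $2^{-k}$) containing $x$, with $v_m = x$. Since the level-$(k+1)$ cell containing $x$ is one of the three corner subtriangles of the level-$k$ cell containing $x$, both $v_k$ and $v_{k+1}$ lie among the six vertices of $V_{k+1}$ inside the latter. A brief case-check on the refinement (three corner subtriangles glued at their midpoint vertices) shows that any two of these six vertices are joined by a path of length at most two in $\Gamma_{k+1}$; Cauchy--Schwarz applied to this short path together with the single-edge estimate gives $|u(v_{k+1})-u(v_k)| \leq 2\,(3/5)^{(k+1)/2}\mc E(u,u)^{1/2}$. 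Summing the resulting geometric series from $k=n$ to $m-1$ yields
\[
|u(x)-u(v_n)| \leq C\,(3/5)^{n/2}\mc E(u,u)^{1/2}.
\]

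For distinct $x,y \in V^*$, I would then pick $n$ with $2^{-(n+1)} < |y-x| \leq 2^{-n}$. An elementary geometric check on the Sierpinski construction shows that at this scale (after absorbing a universal constant into $n$ if necessary) the level-$n$ cells containing $x$ and $y$ either coincide or share at least a vertex of $V_n$, so the representatives $v_n, w_n$ can be joined by a path of length at most two in $\Gamma_n$ and the single-edge estimate gives $|u(v_n)-u(w_n)| \leq C\,(3/5)^{n/2}\mc E(u,u)^{1/2}$. Combining this with the vertical chain bound applied at both $x$ and $y$, and using the identity $(3/5)^{n/2} = 2^{-n\log(5/3)/(2\log 2)}$ together with $2^{-n} \leq 2|y-x|$, yields
\[
|u(y)-u(x)| \leq c\,|y-x|^{\alpha}\,\mc E(u,u)^{1/2}
\]
with $\alpha = \log(5/3)/(2\log 2)$, as required.

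The main obstacle is the combinatorial bookkeeping that secures (i) at each refinement step $v_k$ and $v_{k+1}$ really lie inside a common level-$k$ cell and are within $\Gamma_{k+1}$-distance at most two of each other (the delicate point being that $v_{k+1}$ may be a midpoint newly introduced in $V_{k+1}\setminus V_k$), and (ii) the level-$n$ cells containing $x$ and $y$ are adjacent whenever $|y-x| \leq 2^{-n}$. Both points follow from the explicit self-similar description of the cell structure but need to be stated cleanly; once they are in place, the remainder is convergent geometric series and the algebraic identification of the exponent $\alpha$.
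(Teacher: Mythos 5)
Your proposal is correct and follows essentially the same route as the paper: a single-edge estimate $|u(x)-u(y)| \leq (3/5)^{n/2}\mc E(u,u)^{1/2}$ for $x\sim_n y$ (which the paper phrases as the effective-resistance bound $R_n(x,y)^{-1}\geq (5/3)^n$, but amounts to the same step of dropping the remaining nonnegative terms and using the monotonicity from Proposition \ref{p2}), followed by a multiscale chaining via the triangle inequality. The paper compresses the chaining into a single sentence, whereas you spell out the cell-adjacency bookkeeping --- including the one-level shift needed so that the cells containing $x$ and $y$ actually intersect --- which is a more careful rendering of the same argument rather than a different one.
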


\begin{proof}
For each $x, y \in V_n$, define
\[
R_n(x,y) = \sup_{u: \mc E_n(u,u) \neq 0} \frac{|u(x) - u(y)|^2}{\mc E_n(u,u)}.
\]

By definition, $|u(x)-u(y)| \leq \sqrt{R_n(x,y)} \mc E_n(u,u)^{1/2}$. Notice
that for any constants $a>0$, $b \in \bb R$ we have $\mc E_n(au+b,au+b)= a^2\mc
E_n(u,u)$. Therefore,
\[
R_n(x,y)^{-1} = \inf_{\substack{u(x)=0\\u(y)=1}} \mc E_n(u,u).
\]

Assume that $x \sim_ny$. Considering the function $u(z) = {\bf 1}(z=y)$, we see
that $R(x,y)^{-1} \leq 4(5/3)^n$. By definition, $R_n(x,y)^{-1} \geq (5/3)^n$.
Therefore, 
\[
|u(x)-u(y)| \leq (3/5)^{n/2} \mc E_n(u,u)^{1/2} \leq 2^{-\alpha n} \mc
E_n(u,u)^{1/2}, 
\]
where $\alpha = \log(5/3)/2\log 2$. Remember that $|x-y| = 2^{-n}$ for $x
\sim_n y$. Therefore, we have proved the inequality for $x \sim_n y$ when some 
$n \geq 0$. Using the triangle inequality, we can extend this relation to $x, y
\in V^*$ arbitrary.
\end{proof}

In particular, the previous proposition tells us that any function $u: V^* \to
\bb R$ with $\mc E(u,u) < +\infty$ is uniformly continuous. Therefore, $u$ can
be continuously extended in a unique way to the set $K$. From now on, we
consider $u$ as defined in $K$, and we will assume that $u:K \to \bb R$ with
$\mc E(u,u)<+\infty$ is continuous. Notice that the points $a_i$, $i=0,1,2$ are
different from the other points in $V^*$. In fact, the points $a_i$ have only
two neighbors in $\Gamma_n$ while other points in $V_n$ have 4 neighbors. It is
natural to define $V_0$ as the {\em boundary} of $K$. Define 
\[
H_1^0(K) = \{u: K \to \bb R; \mc E(u,u) < +\infty, u(a_i)= 0 \; \forall i\}.
\]

For $u \in H_1^0(K)$, define $||u||_1= \mc E(u,u)^{1/2}$. Notice that $\mc
E(u,u)=0$ if and only if $u$ is constant in $K$. Therefore, $||\cdot||_1$ is a
norm. By Proposition \ref{p1}, $H_1^0(K)$ is closed under this norm. It is easy
to see that $(H_1^0(K),||\cdot||_1)$ is a Hilbert space, with inner product
given by the polarization identity
\[
\mc E(u,v) = \big(\mc E(u+v,u+v)-\mc E(u-v,u-v)\big)/4.
\]

Denote by $\mc C_0(K)$ the set of continuous functions $u: K \to \bb R$ with
$u(a_i)=0$ for every $i$. We define $\mc L^2(\mu)$ as the completion of $\mc
C_0(K)$ under the norm
\[
||u||_0= \big\{\int u(x)^2 \mu(dx) \big\}^{1/2}.
\]

\begin{proposition}
The space $H_1^0(K)$ is dense in $\mc L^2(\mu)$.
\end{proposition}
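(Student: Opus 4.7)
The plan is to show that $H_1^0(K)$ is already dense in the smaller space $\mc C_0(K)$ under the sup norm; since $\mu(K) < +\infty$, uniform approximation implies $\|\cdot\|_0$-approximation, and the definition of $\mc L^2(\mu)$ as the $\|\cdot\|_0$-completion of $\mc C_0(K)$ then yields the proposition at once.

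Given $u \in \mc C_0(K)$, the natural approximating sequence is obtained by piecewise harmonic extension from $V_n$. Concretely, I let $u_n = \bar u_n$ denote the unique function on $V^*$ supplied by Proposition \ref{p2}, which equals $u$ on $V_n$ and minimizes every $\mc E_{n+p}$ among extensions with that boundary data. Proposition \ref{p2} then gives $\mc E(u_n,u_n) = \mc E_n(u,u) < +\infty$, so Proposition \ref{p1} lets me extend $u_n$ continuously to $K$; since $a_i \in V_0 \subseteq V_n$ and $u(a_i) = 0$, the extension lies in $H_1^0(K)$.

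The core step is a maximum principle for $u_n$. The minimization problem in Proposition \ref{p2} decouples across the small triangles at level $n$, because every bond of $\Gamma_{n+1}$ sits inside a single such triangle; and the explicit minimizer $\beta_i = (2\sigma-\alpha_i)/5$ writes each new midpoint value as a convex combination, with non-negative weights $1/5,\, 2/5,\, 2/5$ summing to $1$, of the three parent-vertex values. Iterating this observation, I would prove by induction that for every $m \geq n$ and every vertex of $V_m$ lying in a small triangle $T$ at level $n$ with vertices $v_0,v_1,v_2 \in V_n$, the value of $u_n$ at that vertex lies in the interval $[\min_i u(v_i),\max_i u(v_i)]$. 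By continuity of the $K$-extension the same bound then holds at every $x \in T \cap K$.

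Since $T$ has diameter $2^{-n}$ and $u$ is uniformly continuous on the compact set $K$, the oscillation $\max_i u(v_i) - \min_i u(v_i)$ tends to zero uniformly in $T$, giving $\|u_n - u\|_\infty \to 0$ and hence $\|u_n - u\|_0 \to 0$. The main (mild) obstacle is the maximum-principle step itself: one must check that the minimization really does decouple triangle by triangle and that the convex-combination structure survives the iterated passage from $V_n$ to $V_{n+1}, V_{n+2},\dots$ and thence to $K$. Both facts follow from direct inspection of the recursive construction of $(V_n,E_n)$ and of the formula in the proof of Proposition \ref{p2}, so I do not expect any serious difficulty beyond bookkeeping.
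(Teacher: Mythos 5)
Your proposal is correct and follows essentially the same route as the paper: approximate $u \in \mc C_0(K)$ by the harmonic extensions $\bar u_n$ of Proposition \ref{p2}, use the convex-combination form of the minimizer $\beta_i = (2\sigma-\alpha_i)/5$ to get a maximum principle on each level-$n$ triangle, and conclude uniform (hence $\mc L^2(\mu)$) convergence from the uniform continuity of $u$. The only difference is that you spell out the triangle-by-triangle decoupling and the induction more explicitly than the paper does.
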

\begin{proof}

Notice that $H_1^0(K) \subseteq \mc C_0(K) \subseteq \mc L^2(K)$.
It is enough to see that $H_0^1(K)$ is dense in $\mc C_0(K)$. Take $u \in \mc
C_0(K)$, and define $\bar u_n$ as in Proposition \ref{p2}. Notice that given
three numbers $\alpha_0, \alpha_1, \alpha_2$, $\beta_i = (2\sigma -\alpha_i)/5$ 
is between the maximum and the minimum of $\alpha_i$. Therefore, the function
$\bar u_n$ satisfies a maximum principle, and for every $x \in V^*$,
\[
|u(x) - \bar u_n(x)| \leq |u(x)-u(y)| + \sup_{\substack{x', y' \in V_n\\x'
\sim_n y'}} |u(x')-u(y')|
\]
for some $y \in V_n$. Therefore, 
\[
\sup_{x \in V^*} |u(x) - \bar u_n(x)| \leq 2 \sup_{\substack{x, y \in V^*\\|x-y|
\leq 2^{-n}}} |u(x)-u(y)|,
\]
which goes to $0$ as $n \to \infty$.
\end{proof}

Denote by $\<u,v\>$ the inner product in $\mc L^2(\mu)$. Define in an analogous
way the spaces $\mc L^2(\mu_n)$ and the inner products $\mc E_n(u,v)$,
$\<u,v\>_n$. Remember the identity
\[
\mc E_n(u,u) = - \<u, \Delta_n u\>_n.
\]

By analogy with the case of the real line, we define the {\em Dirichlet
Laplacian} $\Delta: D(\Delta) \subseteq 
\mc L^2(\mu) \to \mc L^2(\mu)$ as the unbounded operator given by
\begin{itemize}
\item[i)] $D(\Delta) = \{ u \in H_1^0(K); \exists c>0 \text{ with } \mc E(u,v)
\leq c||v||_0 \;\forall \; v \in H_1^0(K)\}$.
\item[ii)] For $u \in D(\Delta)$, $\Delta u =h$ if and only if $\mc E(u,v) =
-\<h,v\>$ for every $v \in H_0^1(K)$.
\end{itemize}

Notice that this definition of the Laplacian in $K$ is not constructive. At
this point, even to find a single example of a function $u \in D(\Delta)$ is
hard to achieve. Moreover, for a generic $u \in \mc C_0(K)$, the approximations
$\bar u_n$ of Proposition \ref{p2} are not in $D(\Delta)$. For this reason we
will extend the definition of the Laplacian in the following way. For $u \in \mc
L^2(\mu)$, define the dual norm
\[
||u||^2_{-1} = \sup_{v \in H_1^0(K)} \big\{2 \<u,v\> - \mc E(v,v)\big\}.
\]

By Proposition \ref{p2}, for any $u \in H_1^0(K)$ we have $||u||_\infty \leq
c||u||_1$. Therefore, the {\em Friedrich's inequality} $||u||_0 \leq c||u||_1$
holds, and in particular $||u||_{-1} < +\infty$ for every $u \in \mc L^2(K)$. We
denote by $H_{-1}(K)$ the closure of $\mc L^2(K)$ under this norm. We extend the
definition of $\Delta$ to the operator (still denoted by) $\Delta: H_1^0(K) \to
H_{-1}(K)$ such that
\[
\< \Delta u, v\> = \mc E(u,v) \text{ for all } u,v \in H_1^0.
\]

Notice that $\Delta$ is now well defined as an element of $H_{-1}(K)$ for every
$u \in H_1^0$, and $\Delta$ is an isometry from $H_1^0(K)$ to $H_{-1}(K)$. Since
$\mc L^2(\mu)$ is dense in $H_{-1}$, we conclude that $D(\Delta)$ is dense in
$\mc L^2(\mu)$. For any function $u \in \mc C_0(K)$, we have
\[
\Delta \bar u_n = \sum_{x \in V_n} u(x) \delta_x,
\]
and therefore we are plenty of examples of functions $u$ for which $\Delta u$ can be evaluated (in $H_{-1}$, of course). 
Since $H_1^0 \subseteq \mc C_0(K)$, the set $\mc M^0(K)$ of Radon measures in $K
\setminus V_0$ is contained in $H_{-1}(K)$, and the previous expression makes
sense. We say that $\bar u_n$ is the harmonic continuation of $u|_{V_n}$. Notice
that $\bar u_n$ converges to $u$ in $H_1^0(K)$, and therefore $\Delta \bar u_n$
converges to $\Delta u$ in $H_{-1}(K)$.

\subsection{The {\em carr\'e du champ}}

The set $K$ does not admit a differentiable structure, since it is clear that no neighborhood of $K$ can be put in bijection with an open set of $\bb R^d$ in a differentiable way. Therefore, the notion of a gradient $\nabla u$ for functions $u:K \to \bb R$ seems hopeless. Moreover, since the dimension of $K$ is not an integer, it is not clear how many components should $\nabla u$ have. What is remarkable, is that the so-called {\em carr\'e du champ} $|\nabla u|^2$ can be defined in a very simple way. We say that a set $\mc T \subseteq K$ is a {\em triangle} if $\mc T = K \cup \Delta(x_0,x_1,x_2)$ for some triangle $\Delta(x_0,x_1,x_2)$ with vertices mutually adjacents in $V_n$ for some $n \geq 0$ (that is, $x_i \sim_n x_j$ for $i \neq j$). In an equivalent way, $\mc T$ is a triangle if $\mc T$ is of the form $\varphi_{i_n} \circ \cdots \circ \varphi_{i_1}(K)$ for some sequence $\{i_1,\dots,i_n\}$ in $\{0,1,2\}$. Fix a function $u \in H_{1}(K)$. For each triangle $\mc T$, we define
\[
\int_{\mc T}  d\mu_{[u,u]}  = \lim_{n \to \infty} (5/3)^n\sum_{\substack{x, y \in \mc T\\x \sim_n y}} \big(u(y)-u(x)\big)^2.
\]

By the proof of Proposition \ref{p2}, this sum is increasing and therefore the limit always exists. Moreover, the limit is always finite, since it is bounded by $\mc E(u,u)$. The set of triangles generates the Borel topology in $K$. It is also not hard to check the continuity at vacuum of the set-valued function $\mu_{[u,u]}$. Therefore, we conclude that $\mu_{[u,u]}$ can be extended to a positive, finite measure in $K$. For two given functions $u$, $v$ in $H_1(K)$, we define the measure $\mu_{[u,v]}$ by polarization:
\[
\mu_{[u,v]}= \frac{1}{4}\big(\mu_{[u+v,u+v]} - \mu_{[u-v,u-v]}\big).
\]

It has been shown \cite{Kus} that the measures $\mu_{[u,u]}$ are {\em singular} with respect to the Hausdorff measure $\mu$ for any $u \in H_1(K)$. However, it has been shown that these measures are not mutually singular in the following sense: there exists a measure $\bar \mu$ in $K$ such that for any pair of functions $u$, $v$ in $H_1(K)$ we have
\[
\mu_{[u,v]} = \Gamma(u,v) \bar \mu
\]
for some function $\Gamma(u,v)$ in $\mc L^1(\bar \mu)$. In fact, the measure $\bar \mu$ can be chosen as equal to $\mu_{[h_1,h_1]}+\mu_{[h_2,h_2]}$ for suitable harmonic functions $h_1$, $h_2$. Therefore, we can define $\nabla u \cdot \nabla v = \Gamma(u,v)$ to get the identity
\[
\mc E(u,v) = \int \nabla u \cdot \nabla v d \bar \mu.
\]

\subsection{Harmonic functions and the integration by parts}

Usually, a function $h: K \to \bb R$ is said to be {\em harmonic} if $\Delta h=0$. Notice
that the only function $h$ in $H_1^0(K)$ for which $\Delta h =0$ is $h=0$ and we only have defined $\Delta h$ for functions in $H_1^0(K)$. Therefore,
we need a definition of what we mean by an harmonic function. A function $h: V^* \to
\bb R$ is said to be {\em harmonic} if $\Delta_n h(x)=0$ for every $x \in V_n
\setminus V_0$ and every $n\geq 1$. In that case, $\mc E(h,h)= \mc E_0(h,h)$ and
by Proposition \ref{p2}, $h$ can be uniquely extended to a continuous function
$h: K \to \bb R$. Notice that $h$ is entirely determined by its values at the
boundary $V_0$.

We extend the definition of $\Delta$ as follows. For a
continuous function $u$ not necessarily in $H_1^0(K)$, we say that $\Delta u =
v$ if there exists an harmonic function $h$ such that $u-h \in H_1^0(K)$ and
$\Delta(u-h)=v$. In this case we say that $u \in H_1(K)$. 

For a function $u: V_n \to \bb R$,
we define the discrete Dirichlet Laplacian by $\Delta^D_n u(x) =
\Delta_n u(x)$ if $x \notin V_0$ and $\Delta^D_nu(x)=0$ if $u \in V_0$. Define
the normal derivatives $\partial^i_n u$ by
\[
\partial^i_n u = (5/3)^n \sum_{\substack{y \in V_n\\y \sim_n a_i}} u(y) -u(a_i).
\]

We have the following (discrete) integration by parts formula:
\[
\<u, \Delta_n^D v\>_n = \<v, \Delta_n^D u\>_n + \sum_{i=0,1,2} \Big( u(a_i)
\partial^i_n v - v(a_i) \partial^i_n u\Big). 
\]

In order to obtain an analogous formula for the Dirichlet Laplacian, for $u \in
H_1^0(K)$ we define
\[
\partial^i u = \< \Delta u, h^i\>,
\]
where $h^i$ is the harmonic function with $h^i(a_j) = \delta_{ij}$. Since
$\Delta$ is symmetric, $\<u,\Delta v\> = \< v, \Delta u\>$ for any pair of
functions $u, v \in H_1^0(K)$. It is straightforward to check the identity
\[
\<u, \Delta v\> -\<v, \Delta u\> = \sum_{i=0,1,2} \big\{ u(a_i) \partial^i v -
v(a_i) \partial^i u\big\}
\]
for any two functions $u,v \in H_1(K)$.

\subsection{The Green function in $K$} 
Friedrich's inequality tells us that the Dirichlet Laplacian has a positive
spectral gap in $\mc L^2(\mu)$. In particular, for every $u \in \mc L^2(\mu)$,
the equation
\begin{equation}
\label{ec6}
\begin{cases}
-\Delta w = u\\
w \in H_1^0(K)\\
\end{cases}
\end{equation}
has a unique solution. Since the inclusion $H_1^0(K) \subseteq \mc L^2(\mu)$ is compact, we conclude that the operator $(-\Delta)^{-1}$ is compact. We put the $-$ sign to emphasize that $\Delta$ is non-positive. In particular, there are an orthonormal basis $\{v_i\}_i$ of $\mc L^2(\mu)$ and a non-decreasing sequences $\{\lambda_i\}_i$ of positive numbers such that $-\Delta v_i = \lambda_i v_i$ for any $i$.
The function $w$ can be written in terms of the
orthonormal basis $\{v_i\}_i$:
\[
w = \sum_{i \geq 1} \lambda_i^{-1}\<u,v_i\>v_i.
\]

Formally, we can obtain $w(x)$ by an integral formula:
\[
w(x) = \int \mc G(x,y) u(y) \mu(dy), \text{ where }
\mc G(x,y)= \sum_{i \geq 1} \lambda_i^{-1} v_i(x)v_i(y)
\]
is the Green function associated to the Dirichlet Laplacian $\Delta$ in $K$. A
simple computation shows that the sum defining $\mc G(x,y)$ is convergent in
$\mc L^2(\mu \otimes \mu)$ if $\sum_{i \geq 1} \lambda_i^{-2} < +\infty$.
We will give a constructive definition of $\mc G(x,y)$ that will allow us to
prove finer properties of $\mc G(x,y)$. 
Take the discrete Laplacian $\Delta_n$ in $V_n$ and 
define the Green function $\mc G_n(x,y)$ at $x,y \in V_n$ as the solution of
\[
\Delta_n \mc G_n(x,y) =
\begin{cases}
 3^n \delta(x,y) , &x \in V_n \setminus V_0\\
 0, &x \in V_0,
\end{cases}
\]
where $\delta(x,y)=1$ if $x=y$ and $\delta(x,y)=0$ otherwise.
Here the operator $\Delta_n$ acts on the first variable $x$. Notice that $\mc
G_n(x,y)$ is non-negative and $\mc G_n(x,y) \leq \mc G_n(x,x)$ for any $x, y \in V_n$.
We extend the definition of $\mc G_n(x,y)$ to $K$ by taking the harmonic
continuation of $\mc G_n$ given by Proposition \ref{p1}. A key observation is
that for $y \in V_n$, $\mc G_{n+1}(x,y)= \mc G_n(x,y)$ for any $x \in V_{n+1}$.
In fact, for $x \in V_{n+1}$ with $x\neq y$, $\Delta_{n+1} \mc G_n(x,y)=0$.
For $x = y$, a simple computation shows that $\mc G_n$ scales correctly, and
therefore $\Delta_{n+1} \mc G_n(y,y) = 3^{n+1}$.
The following propositions shows that it is sufficient to compute $\mc G_1(x,y)$
to obtain $\mc G_n(x,y)$ for every $n \geq 1$, $x,y \in V^*$:

\begin{proposition}
For any $i=0,1,2$,
\[
\mc G_{n+1}(\varphi_i(x),\varphi_i(y)) = \frac{3}{5} \mc G_n(x,y) +
\sum_{j=0,1,2} h^j(y) \mc G_{1}(\varphi_i(x),\varphi_i(a_j)).
\]
\label{p3}
\end{proposition}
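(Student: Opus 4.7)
The plan is to fix $x \in V_n$ and $i \in \{0,1,2\}$, and to analyze, as a function of its second argument, the quantity
\[
f(z) := \mc G_{n+1}(\varphi_i(x), \varphi_i(z)) - \tfrac{3}{5}\, \mc G_n(x, z), \qquad z \in V_n.
\]
I will show that $f$ is $V_n$-harmonic off $V_0$, express it as a combination of the three harmonic functions $h^0,h^1,h^2$, and then invoke the stabilization property of the Green functions stated in the excerpt to replace $\mc G_{n+1}$ by $\mc G_1$ in the boundary term.

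The main step is to compute $\Delta_n^{(z)} \mc G_{n+1}(\varphi_i(x), \varphi_i(z))$ at an interior point $z \in V_n \setminus V_0$. The key geometric fact is that for such $z$ the $\sim_{n+1}$-neighbors of $\varphi_i(z)$ in $V_{n+1}$ are exactly $\{\varphi_i(w) : w \sim_n z\}$: a point of $\varphi_i(V_n)$ agrees with a point of $\varphi_j(V_n)$ for some $j \neq i$ only at the junction points $\varphi_i(a_k)=(a_i+a_k)/2$, and the hypothesis $z \notin V_0$ rules this out. Combined with the scaling factor $5$ between $\Delta_{n+1}$ and $\Delta_n$, the injectivity of $\varphi_i$, and the symmetry of $\mc G_{n+1}$ (which comes from self-adjointness of the Dirichlet discrete Laplacian with respect to $\<\cdot,\cdot\>_{n+1}$), this yields
\[
\Delta_n^{(z)}\mc G_{n+1}(\varphi_i(x),\varphi_i(z)) \;=\; \tfrac{1}{5}\, 3^{n+1}\delta(x,z) \;=\; \tfrac{3}{5}\,3^n\delta(x,z).
\]
Since the analogous symmetry for $\mc G_n$ gives $\Delta_n^{(z)}\mc G_n(x,z)=3^n\delta(x,z)$ on $V_n \setminus V_0$, subtraction produces $\Delta_n f \equiv 0$ on $V_n \setminus V_0$.

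A $V_n$-harmonic function is determined by its values at $V_0$, so $f(y) = \sum_{j=0,1,2} f(a_j)\, h^j(y)$ for every $y \in V_n$. Since $a_j \in V_0$ forces $\mc G_n(x, a_j)=0$, we read off $f(a_j) = \mc G_{n+1}(\varphi_i(x), \varphi_i(a_j))$. Now apply the stabilization property from the excerpt (iterated: $\mc G_\ell(\cdot,w)\equiv \mc G_m(\cdot,w)$ on $V_\ell$ whenever $w\in V_m$ and $\ell\geq m$); since $\varphi_i(a_j)\in V_1$ and $\varphi_i(x)\in V_{n+1}$, this yields $\mc G_{n+1}(\varphi_i(x),\varphi_i(a_j)) = \mc G_1(\varphi_i(x),\varphi_i(a_j))$, and substituting back delivers the claimed identity.

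The main obstacle is the geometric verification of the neighbor structure at $\varphi_i(z)$ for $z \in V_n \setminus V_0$: one must ensure that mapping an interior vertex of $\Gamma_n$ by the contraction $\varphi_i$ introduces no new neighbors in $\Gamma_{n+1}$ coming from an adjacent subtriangle $\varphi_j(K)$ with $j \neq i$. Once this is in hand, the scaling of the discrete Laplacian and the linearity of the problem reduce the rest to a routine harmonic-extension argument.
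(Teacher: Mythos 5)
Your proof is correct and rests on the same mechanism as the paper's argument (deferred to the Appendix): the factor $3/5$ comes from the scaling $\Delta_{n+1}(u\circ\varphi_i)=5\,\Delta_n u$ on the cell $\varphi_i(K)$ together with the $3^{n+1}$ versus $3^n$ normalization of the delta, the absence of extra $\sim_{n+1}$-neighbors at $\varphi_i(z)$ for $z\notin V_0$, and the correction by the unique harmonic function matching the values at the three corners $\varphi_i(a_j)$, where the previously stated stabilization $\mc G_{n+1}(\cdot,y)=\mc G_n(\cdot,y)$ for $y\in V_n$ replaces $\mc G_{n+1}$ by $\mc G_1$. The only organizational difference is that you subtract the rescaled candidate from the true Green function and identify the harmonic remainder by its boundary values, whereas the paper builds the finer Green function from an explicit local piece plus the harmonic correction; the two computations are identical.
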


The proof is simple; we refer to Section \ref{s5} for the argument. As a consequence of this relation and the previous discussion, $\mc G_n(x,y)$ does
not really depend on $n$. Therefore, we define for $x,y \in V^*$, $\mc G(x,y)=
\mc G_n(x,y)$, where $n$ is such that $x,y \in V_n$. For fixed $y$, we see that
$\mc E(\mc G(\cdot,y),\mc G(\cdot,y)) = \mc G(y,y).$ In particular, $\mc
G(\cdot,y)$ is uniformly continuous and can be uniquely extended to $K$. In this
way we can not define $\mc G(x,x)$ for $x \notin V^*$. We would like to prove
that in fact $\mc G(x,y)$ is uniformly continuous in $V^* \times V^*$. This is
an immediate consequence of the following proposition:

\begin{proposition}
There exists a constant $c>0$ such that $\mc G(x,x) \leq c$ for every $x \in
V^*$. 
\end{proposition}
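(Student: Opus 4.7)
The plan is to iterate Proposition \ref{p3} along the diagonal $y=x$ to obtain a self-improving recursion for
\[
M_n := \max_{x \in V_n} \mc G(x,x).
\]
Every point of $V_{n+1}$ has the form $\varphi_i(x)$ with $x \in V_n$ and $i \in \{0,1,2\}$, and since $\mc G_n(x,x) = \mc G(x,x)$ for $x \in V_n$, Proposition \ref{p3} specialised to $y=x$ gives
\[
\mc G(\varphi_i(x), \varphi_i(x)) = \tfrac{3}{5}\, \mc G(x,x) + \sum_{j=0,1,2} h^j(x)\, \mc G_1(\varphi_i(x), \varphi_i(a_j)).
\]
The task is then to bound the ``inhomogeneous'' second term by a constant independent of $n$ and $x$, so that taking the maximum over $x \in V_n$ produces $M_{n+1} \le (3/5)\, M_n + C$ for some fixed $C$.

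For this I would combine three ingredients. First, the harmonic functions $h^j$ with $h^j(a_i) = \delta_{ij}$ satisfy $0 \le h^j \le 1$ pointwise and $h^0 + h^1 + h^2 \equiv 1$ on $K$, by the discrete maximum principle and uniqueness of harmonic extensions. Second, for each fixed pair $(i,j)$ the function $\mc G_1(\,\cdot\,, \varphi_i(a_j))$ on $K$ is the harmonic extension from the finite set $V_1$, so the maximum principle gives
\[
\mc G_1(\varphi_i(x), \varphi_i(a_j)) \;\le\; C_1 \;:=\; \max_{x', y' \in V_1}\mc G_1(x',y'),
\]
a finite, explicit constant computable from the small linear system at scale $1$. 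Third, as already noted in the paper, $\mc G_n \ge 0$, so all terms in the recursion are non-negative; combining this with $\sum_j h^j(x) = 1$ yields the clean bound $M_{n+1} \le (3/5)\, M_n + C_1$.

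The recursion is then iterated. The Dirichlet boundary condition built into the defining problem of $\mc G_n$ gives $\mc G(a_i,a_i) = 0$, hence $M_0 = 0$, and
\[
M_n \;\le\; C_1 \sum_{k=0}^{n-1}\left(\tfrac{3}{5}\right)^k \;\le\; \tfrac{5}{2}\, C_1
\]
uniformly in $n$. Since every $x \in V^*$ lies in some $V_n$, this gives the required uniform bound with $c = (5/2) C_1$. The key point, and really the only obstacle, is the self-similar structure encoded in Proposition \ref{p3}: the correction produced when refining the scale from $n$ to $n+1$ factors through fixed scale-$1$ data, and the prefactor $3/5$ is precisely the contraction coefficient arising from the renormalisation of $\mc G_n$ under the maps $\varphi_i$. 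Once that self-similarity is granted, the contraction factor $3/5 < 1$ forces the bound by a geometric-series argument.
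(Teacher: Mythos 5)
Your argument is correct and is essentially the paper's own proof: both iterate Proposition \ref{p3} on the diagonal, bound the inhomogeneous term by a fixed constant coming from scale-$1$ data (you via the maximum principle for the harmonic continuation, the paper via boundedness of $\mc G(\cdot,\varphi_i(a_j))$ on $K$), and conclude from the contraction factor $3/5<1$. The only cosmetic difference is that you run the recursion for $M_n=\max_{V_n}\mc G(x,x)$ while the paper uses $\beta_n=\sup_{V_n\setminus V_{n-1}}\mc G(x,x)$; both are valid since $V_{n+1}=\cup_i\varphi_i(V_n)$.
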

\begin{proof}
Notice that $\sum_i h^i(x) =1$ for every $x \in K$. In fact, $\sum_i h_i$
corresponds to the harmonic function with $h(a_i)=1$ for $i=0,1,2$, which is identically constant. Taking $x=y$
in Proposition \ref{p3}, we see that
\[
\mc G(\varphi_i(x),\varphi_i(x)) = \frac{3}{5} \mc G(x,x) + \sum_{j=0,1,2} h^j(y) \mc
G(\varphi_i(x), a_i).
\]

Since every $y \in V_{n+1} \setminus V_n$ is equal to $\varphi_i(x)$ for some
$i$ and some $x \in V_n \setminus V_{n-1}$, we see that
\[
\sup_{y \in V_{n+1}\setminus V_n} \mc G(y,y) \leq 3/5\sup_{x \in V_n \setminus
V_{n-1}} \mc G(x,x)
+ \sup_{j=0,1,2} \sup_{x \in K} \mc G(x,\varphi_i(a_j)).
\]

Setting $\beta_n = \sup_{x \in V_n\setminus V_{n-1}} \mc G(x,x)$, we see that
$\beta_{n+1} \leq 3/5 \beta_n +c$ for some constant $c$ independent of $n$. A
simple computation shows that $\beta_n$ is bounded in $n$.
\end{proof}

Remember that for any function $u \in H_1^0(K)$, $\Delta \bar u_n$ converges to
$\Delta u$ in $H_{-1}(K)$. Therefore, $\Delta \mc G(x,y) = \delta_y(x)$ and
$w(x) = \int \mc G(x,y) u(y) \mu(dy)$ is the solution of equation (\ref{ec6}), at
least for functions $u \in \mc C_0(K)$. By the continuity of $\mc G(x,y)$, we
conclude that $\int \mc G(x,y) u(y) \mu(dy)$ solves (\ref{ec6}) for $u \in \mc
L^2(\mu)$ as well.

\section{The nonlinear heat equation in $K$}
\label{s2}

Let $\phi: \bb R_+ \to \bb R_+$ be a smooth function. We will assume that there exists $\epsilon_0 >0$ such that $\epsilon_0 \leq \phi'(u) \leq \epsilon_0^{-1}$ for any $u \in \bb R_+$. Fix some $T>0$. We want to study the Cauchy problem
\begin{equation}
\label{echid}
\begin{cases}
\partial_t u &= \Delta \phi(u)\\
u(t,a_i) &= \alpha_i, i=0,1,2\\
u(0,\cdot) &= u_0(\cdot).\\
\end{cases}
\end{equation}

More precisely, we want to obtain criteria for existence and uniqueness of solutions for this equation. Now we define what do we understand by a weak solution of (\ref{echid}). We say that $u:[0,T] \times K$ is a weak solution of (\ref{echid}) if:
\begin{itemize}
\item[i)] For almost every $t \in [0,T]$, $u(t,\cdot) \in H_1(K)$ and
\[
\int_0^T ||u(t,\cdot)||_1^2 dt <+\infty.
\]
\item[ii)] For any function $G:[0,T] \to H_1^0(K)$, pointwise differentiable in $t$ and strongly differentiable in $H_{-1}(K)$ as a function of $[0,T]$,
\[
\<u_T,G_T\> -\<u_0,G_0\> - \int_0^T\big\{ \<u_t,\partial_t G_t\> + \< \phi(u_t),\Delta G_t\>\big\} dt = \sum_{i=0,1,2} \alpha_i \partial^i G_t.
\]
\end{itemize}

We will start by considering a finite-difference scheme that approximates equation (\ref{echid}).

\subsection{A discrete nonlinear equation}
Take a function $u_0^n: V_n \to \bb [0,\infty)$ such that $u_0^n(a_i) = \alpha_i$. Let us define $u^n(t,x): [0,\infty) \times V_n \to [0,\infty)$ as the solution of the following system of ordinary differential equations:
\[
\left\{
\begin{aligned}
\frac{d}{dt} u^n(t,x) &= \Delta_n \phi(u^n(t,x)) \text{ for } x \in V_n^0\\
u^n(t,a_i) &= \alpha_i \text{ for } i=0,1,2\\
u^n(0,x) &= u_0^n(x).
\end{aligned}
\right.
\]

By the maximum principle and Peano's theorem, $u^n(t,x)$ is well defined for any $t>0$. We will prove existence of solutions for equation (\ref{echid}) in a proper sense by taking limits of these approximated solutions $u^n(t,x)$. To avoid an overcharged notation, we will take $\alpha_i=0$. Our arguments work for $\alpha_i$ arbitrary as well: just take into account the boundary terms when performing integrations by parts. Let us define the discrete norms
\[
||u||_{0,n}^2 = \int u(x)^2 \mu_n(dx) = \frac{1}{3^n} \sum_{x \in V_n} u(x)^2,
\] 
\[
||u||_{1,n}^2 = \mc E_n(u,u) = \frac{5^n}{3^n} \sum_{x \sim_n y} (u(y)-u(x))^2.
\]

Let us denote the function $u^n(t,\cdot)$ by $u^n_t$. It is not hard to see that $||u_t^n||_{0,n}$ is decreasing. In fact,
\begin{align*}
\frac{d}{dt}||u_t^n||_{0,n}^2 &= 2\<u_t^n, \Delta_n \phi(u_t^n) \>_n\\
	&=-2\mc E_n(u_t^n,\phi(u_t^n)) \leq -2\epsilon_0 ||u_t^n||_{1,n}^2.
\end{align*}

Integrating this inequality between $t=0$ and $t=T$, we see that
\begin{equation}
\label{ec4}
||u_t^n||_{0,n}^2 + 2\epsilon_0 \int_0^T ||u_t^n||_{1,n}^2dt \leq ||u_0||_{0,n}^2. 
\end{equation}

Let us fix some reference time $T>0$. For a function $u: [0,T] \times K \to \bb R$, define
\[
|||u|||_{1}^2 = \int_0^T ||u||_1^2,
\]
and denote by $H_{1,T}^0(K)$ the Hilbert space obtained as the closure of $\mc C([0,T],H_{1}^0(K))$ with respect to this norm , where $\mc C([0,T],H_{1}^0(K))$ denotes the space of continuous paths in $H_{1}^0(K)$. Fix a continuous function $u_0: K \to [0,\infty)$ with $u(a_i) = 0$ for $i=0,1,2$. Consider $\bar u_t^n$, the harmonic continuation of $u_t^n$ into $K$. By (\ref{ec4}), we have
\[
\sup_n |||\bar u_t^n|||_1^2 \leq \frac{||u_0||_\infty}{2\epsilon_0}.
\]

In particular, there is a subsequence $n'$ such that $u^{n'}_t$ converges to some function $u_t \in H_{1,T}(K)$, weakly with respect to the topology of $H_{1,T}(K)$. 

\begin{theorem}
The function $u_t \in H_{1,T}(K)$ is a weak solution of (\ref{echid}).
\end{theorem}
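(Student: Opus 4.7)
The plan is to pass to the limit in a discrete weak form of the ODE system along the subsequence $n'$. Property (i) is immediate from weak lower semicontinuity together with (\ref{ec4}): one has $\int_0^T ||u_t||_1^2\, dt \leq \liminf_{n'} |||\bar u^{n'}|||_1^2 \leq ||u_0||_\infty/(2\epsilon_0)$. The boundary condition $u_t(a_i)=0$ passes to the limit because evaluation at $a_i$ is a continuous functional on $H_1^0(K)$ by Proposition \ref{p1}.

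For (ii), test the discrete ODE with the restriction $G^n_t$ to $V_n$ of a given $G_t \in H_1^0(K)$ and integrate against $\mu_n$. Since $\alpha_i=0$ and $G_t(a_i)=0$, the boundary contributions in the discrete integration by parts formula of Section \ref{s1} vanish, giving
\[
\frac{d}{dt}\<u^n_t, G^n_t\>_n = \<u^n_t, \partial_t G^n_t\>_n + \<\phi(u^n_t), \Delta_n G^n_t\>_n.
\]
Integration in $t$ from $0$ to $T$ yields the discrete analogue of (ii). The two linear pairings converge to their continuous counterparts since $\mu_n$ converges vaguely to $\mu$ while the harmonic continuations of $G^n_t$ and $\partial_t G^n_t$ converge uniformly to $G_t$ and $\partial_t G_t$ on $K$.

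The nonlinear term $\<\phi(u^n_t), \Delta_n G^n_t\>_n$ is the main obstacle, since $\Delta_n G^n_t$ converges only weakly in $H_{-1}(K)$ and weak convergence of $\bar u^n$ does not survive composition with $\phi$. I would upgrade to strong convergence by an Aubin--Lions type argument. The estimate (\ref{ec4}) bounds $\bar u^n$ uniformly in $L^2([0,T]; H_1^0(K))$; the Lipschitz bound on $\phi$ together with the isometry $\Delta: H_1^0(K) \to H_{-1}(K)$ and the identification of $\Delta \bar w_n$ with a combination of Dirac masses from Section \ref{s1} provide a uniform $L^2([0,T]; H_{-1}(K))$ bound on $\partial_t \bar u^n$; and the inclusion $H_1^0(K) \subseteq \mc L^2(\mu)$ is compact, as noted after (\ref{ec6}). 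These three inputs yield, along a further subsequence, strong convergence $\bar u^n \to u$ in $L^2([0,T]; \mc L^2(\mu))$, and Lipschitz continuity of $\phi$ then gives $\phi(\bar u^n) \to \phi(u)$ strongly. The weak$\times$strong pairing concludes $\int_0^T \<\phi(u^n_t), \Delta_n G^n_t\>_n\, dt \to \int_0^T \<\phi(u_t), \Delta G_t\>\, dt$.

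The technically delicate step is the $H_{-1}(K)$-bound on the time derivative, which requires carefully relating the pointwise derivative $\Delta_n \phi(u^n)$ on $V_n$ to the action of the continuous Laplacian on $H_1^0(K)$ test functions via the harmonic continuation; once in place, the remaining passages to the limit, including the convergence of initial data $\bar u^n_0 \to u_0$ in $\mc L^2(\mu)$, are routine.
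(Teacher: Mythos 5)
Your scheme is sound and reaches the right conclusion, but at the only genuinely delicate point --- passing to the limit in the nonlinear term --- it takes a different route from the paper. You upgrade the weak convergence of $\bar u^{n}$ to strong convergence in $L^2([0,T];\mc L^2(\mu))$ via Aubin--Lions and then use the Lipschitz continuity of $\phi$. The paper never invokes strong space--time compactness: since $\epsilon_0\le\phi'\le\epsilon_0^{-1}$, the sequence $\bar\phi(u^n_t)$ is itself bounded in $H_{1,T}(K)$, so one extracts a weak limit $w_t$ and identifies $w_t=\phi(u_t)$ through the \emph{carr\'e du champ}: the chain rule for energy measures $\mu_{[\phi(u),\phi(u)]}=\phi'(u)\,\mu_{[u,u]}$ together with the uniqueness of the density $\Gamma(u,u)$ in the representation $\mu_{[u,u]}=\Gamma(u,u)\bar\mu$ plays the role that uniqueness of the weak gradient plays on a domain of $\bb R^d$. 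What each buys: the paper's argument is short and stays inside the Dirichlet-form framework, at the price of leaning on Kusuoka's representation; yours is more robust and standard, but it concentrates all the difficulty in the $L^2([0,T];H_{-1}(K))$ bound on $\partial_t\bar u^n$, which you name but do not carry out. Be aware that this step is not a formality: $\partial_t\bar u^n_t$ is the harmonic continuation of the $V_n$-function $x\mapsto\Delta_n\phi(u^n_t)(x)$, which is \emph{not} the same element of $H_{-1}(K)$ as the distribution $\Delta\bar\phi(u^n_t)$ (whose $H_{-1}$-norm is the quantity $\mc E_n(\phi(u^n_t),\phi(u^n_t))^{1/2}\le\epsilon_0^{-1}\|u^n_t\|_{1,n}$ that (\ref{ec4}) controls); you must quantify the discrepancy between pairing against $\mu$ and against $\mu_n$, and since the approximants live on different spaces a discrete variant of Aubin--Lions is really what is required. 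Finally, your concluding ``weak $\times$ strong'' pairing needs $\Delta_n G_t$ to converge at least weakly in $\mc L^2(\mu)$, so the test functions should be restricted to a dense class with $\Delta G_t\in\mc L^2(\mu)$ (e.g.\ finite combinations of the eigenfunctions $v_i$), exactly as the paper does by taking ``$G_t$ smooth enough.''
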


\begin{proof}
Taking a second subsequence if necessary, we can assume that $\bar \phi(u_t^n)$ converges to some function $w_t$ as well, where $\bar \phi(u_t^n)$ is the harmonic continuation of $\phi(u_t^n)$. At this point, we need to justify the identity $w_t=\phi(u_t)$. For the usual Laplacian, defined on a bounded, open set $U \subseteq \bb R^d$, the argument is the following. If $u \in H_1^0(U)$, that is, if $\< u, -\Delta u\> < +\infty$, then there exists a {\em unique} function $\nabla u: U \to \bb R^d$ such that $\<u, \Delta G\> = \<\nabla u, \nabla G\>$ for any smooth function $G$. Since $\phi(u)$ is a smooth function of $u$, $\phi(u)$ also belongs to $H_1^0(U)$, and moreover $\nabla \phi(u) = \phi'(u) \nabla u$. The uniqueness of the {\em weak gradient} $\nabla u$ would allow us to conclude that $w_t = \phi(u_t)$. 

Recall the definition of the {\em carr\'e du champ} $\nabla u$. A simple Taylor expansion shows that 
\[
\int_{\mc T} d\mu_{[\phi(u),\phi(u)]} = \int_{\mc T} \phi'(u) d\mu_{[u,u]},
\]
as expected. Therefore, we can appeal to the uniqueness of the representation $\mu_{[u,u]}=\Gamma(u,u) \bar \mu$ to conclude that $w_t = \phi(u_t)$.

Take a function $G_t \in H_{1,T}(K)$. Assume that $G_t$ is of class $\mc C^1$ in time. 
By hypothesis,
\[
\int_0^T \int_K \bar \phi(u_t^n(x)) \Delta G_t(x) \mu(dx) ds \xrightarrow{n \to \infty} \int_0^T \int_K \phi(u_t(x)) \Delta G_t(x) \mu(dx) dt.
\]

Performing an integration by parts, we see that the left-hand side of the previous expression is equal to
\begin{align*}
\int_0^T \int_K G_t(x) \Delta \bar \phi(u_t^n(x)) \mu(dx) ds 
	&= \int_0^T \<\frac{d}{dt} u_t^n,G_t\>_n dt\\
	&= \<u_T^n,G_T\>_n- \<u_0,G_0\>_n- \int_0^T \<u_t^n, \partial_t G_t\>_n dt.
\end{align*}

Remember that $||u_T^n||_{0,n}$ is also bounded by $||u_0||_\infty$. In particular, choosing a further subsequence if necessary, we can assume that $u_T^n$ converges weakly to $u_T$ in $\mc L^2(\mu)$. Therefore, $\<u_T^n,G_T\>_n$ converges to $\<u_t,G_T\>$. By Friedrich's inequality, weak convergence in $H_{1,T}(K)$ is stronger than weak convergence in $\mc L^2(\mu(dx)\times dt)$. Therefore, we can pass to the limit in each of the terms on the right-hand side of the previous expression. We have therefore proved that
\begin{equation}
\label{ec5}
\<u_T,G_T\>-\<u_0,G_0\> - \int_0^T \big\{ \<u_t, \partial_t G_t\> + \<\phi(u_t), \Delta G_t\> \big\} dt =0
\end{equation}
for any function $G_t$ smooth enough, which proves the theorem.
\end{proof}

\begin{theorem}
The equation (\ref{echid}) has at most one weak solution. 
\end{theorem}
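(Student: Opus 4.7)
My plan is to apply the $H_{-1}$-norm argument sketched in the introduction. Let $u,v$ be two weak solutions of \eqref{echid} with the common initial datum $u_0$ and the same boundary values $\alpha_i$, and set $w_t=u_t-v_t$. Since $u_t,v_t\in H_1(K)$ with $u_t(a_i)=v_t(a_i)=\alpha_i$ for a.e.\ $t$, the triangle inequality for $\mc E$ gives $w_t\in H_1^0(K)$ for a.e.\ $t$, with $\int_0^T\|w_t\|_1^2\,dt<\infty$. Subtracting the weak formulations for $u$ and $v$, the boundary contributions $\sum_i\alpha_i\partial^iG_t$ cancel, so
\[
\<w_T,G_T\>-\int_0^T\bigl\{\<w_t,\partial_tG_t\>+\<\phi(u_t)-\phi(v_t),\Delta G_t\>\bigr\}dt=0
\]
for every admissible test function $G$.

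The formal key step is to take $G_t=(-\Delta)^{-1}w_t$. Since $w_t\in\mc L^2(\mu)$, Friedrich's inequality and \eqref{ec6} make $G_t\in H_1^0(K)$ well defined with $\Delta G_t=-w_t$. Moreover, because $\phi(u_t)-\phi(v_t)$ vanishes at $V_0$ and satisfies the chain rule for the carré du champ (a pointwise Taylor expansion as used just before \eqref{ec5}, which yields $\|\phi(u_t)-\phi(v_t)\|_1\leq\epsilon_0^{-1}\|w_t\|_1$), the identity $\partial_tw_t=\Delta\bigl(\phi(u_t)-\phi(v_t)\bigr)$ in $H_{-1}(K)$ gives, at the formal level, $\partial_tG_t=-(\phi(u_t)-\phi(v_t))\in H_1^0(K)$. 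Substituting $\Delta G_t=-w_t$ and this expression for $\partial_tG_t$ collapses the identity to
\[
\|w_T\|_{-1}^2+2\int_0^T\<w_t,\phi(u_t)-\phi(v_t)\>dt=0.
\]
Monotonicity of $\phi$, in the form $(\phi(u)-\phi(v))(u-v)\geq\epsilon_0(u-v)^2\geq 0$, makes both terms non-negative. Hence both vanish, so $w_t\equiv 0$ in $\mc L^2(\mu)$ for a.e.\ $t\in[0,T]$, proving uniqueness.

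The main obstacle is that $G_t=(-\Delta)^{-1}w_t$ is not a priori pointwise differentiable in $t$, whereas the definition of weak solution demands exactly that. I will overcome this by time-mollification. Extend $w_t$ by zero for $t<0$, set $w_t^\varepsilon=\varepsilon^{-1}\int_{-\varepsilon}^0 w_{t+s}\,ds$, and take $G_t^\varepsilon=(-\Delta)^{-1}w_t^\varepsilon$; this is Lipschitz in $t$ with values in $H_1^0(K)$, hence admissible. For this regularized choice, $\Delta G_t^\varepsilon=-w_t^\varepsilon$ and $\partial_tG_t^\varepsilon=\varepsilon^{-1}(G_t-G_{t-\varepsilon})$, which can be rewritten using $\partial_tw_t=\Delta(\phi(u_t)-\phi(v_t))$ as $\partial_tG_t^\varepsilon=-\varepsilon^{-1}\int_{-\varepsilon}^0(\phi(u_{t+s})-\phi(v_{t+s}))ds$.

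The remaining step is to let $\varepsilon\downarrow 0$. Using the energy bound $\int_0^T\|w_t\|_1^2dt<\infty$ and the $H_1$-Lipschitz estimate for $\phi$, $w^\varepsilon\to w$ and $\phi(u)^\varepsilon\to\phi(u)$ in $L^2([0,T];H_1^0(K))$. The spatial pairing $\int_0^T\<\phi(u_t)-\phi(v_t),\Delta G_t^\varepsilon\>dt$ converges to $-\int_0^T\<\phi(u_t)-\phi(v_t),w_t\>dt$ by continuity. The temporal pairing $\int_0^T\<w_t,\partial_tG_t^\varepsilon\>dt$ assembles, after standard symmetrization in the Steklov average, into $\tfrac12(\|w_T\|_{-1}^2-\|w_0\|_{-1}^2)$ plus a spatial term that combines with the other pairing to produce the factor $2$ above. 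Since $w_0=0$, the displayed identity follows and the conclusion stands. The only delicate point in this passage to the limit is the symmetrization of the Steklov averaging, which is the step I expect to require the most care.
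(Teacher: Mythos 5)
Your proposal is correct and follows essentially the same route as the paper: subtract the two weak formulations, test against $G_t=(-\Delta)^{-1}(u_t-v_t)$ to obtain $\|u_T-v_T\|_{-1}^2=-2\int_0^T\langle u_t-v_t,\phi(u_t)-\phi(v_t)\rangle\,dt$, and conclude by monotonicity of $\phi$. The only difference is cosmetic: the paper regularizes the test function in time with an approximate identity $\gamma_\delta$ supported in $[0,\delta]$, while you use a Steklov average, which is the same mollification device.
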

\begin{proof}
The heuristic argument is very simple. The idea is to formalize the following formal computation. Take two weak solutions $u_t$, $v_t$ of (\ref{echid}). Then,
\begin{equation}
\label{ec7}
\begin{split}
\frac{d}{dt} ||u_t-v_t||_{-1}^2 
	&= 2\<(-\Delta)^{-1}(u_t-v_t), \Delta(\phi(u_t)-\phi(v_t))\>\\
	&= -2\<u_t-v_t, \phi(u_t)-\phi(v_t)\> \leq 0.
\end{split}
\end{equation}

Therefore, if $u_0=v_0$, we conclude that $u_t=v_t$ for any $t>0$. Of course this heuristic computation needs to be justified. By (\ref{ec5}), we have
\[
\<u_T-v_T,G_T\>= \int_0^T \big\{ \<u_t-v_t, \partial_t G_t\> + \<\phi(u_t)-\phi(v_t), \Delta G_t\> \big\} dt.
\]

Let us take $G_t = (-\Delta)^{-1}(u_t -v_t)$ in the previous expression. That is, define
\[
G_t(x) =\int_K \mc G(x,y)\big(u_t(y) -v_t(x)\big)\mu(dy).
\]

Putting this into the previous formula, we obtain immediately
\[
||u_T-v_T||_{-1}^2 = -2\int_0^T \<u_t-v_t,\phi(u_t)-\phi(v_t)\>dt,
\]
which is just the integral version of (\ref{ec7}). But we still need to justify that $G_t$ can be taken as a test function. Since $u_t$ and $v_t$ are in $H_1(K)$ for almost every $t \in [0,T]$, for fixed $t$ the function $G_t$ is regular enough. The problem is that $G_t$ maybe is not differentiable in $t$. This problem solves easily by taking an approximation of the identity $\gamma_\delta(t)$ with support in $[0,\delta]$ and defining 
\[
G_t^\delta = \int_0^\delta \gamma_\delta(s) G_{t+s} ds.
\]

Now $G_t^\delta$ is differentiable, so it is an admissible test function. Taking $\delta \to 0$ we obtain the desired result.
\end{proof}

\section{Hydrodynamic limit for the zero-range process}
\label{s3}
\subsection{The zero-range process}
Let $g: \bb N_0 = \{0,1,\cdots\} \to [0,\infty)$ be a function with $g(0)=0$. The zero-range process in $V_n$ with interaction rate $g(\cdot)$ is defined as the continuous-time Markov chain $\xi_t$ in $\overline{\Omega}_n = \bb N_0^{V_n}$ and generated by the operator 
\[
L_{zr}^b = \sum_{x\in V_n} \sum_{y \sim_n x} g\big(\xi(x)\big)\big[f(\xi^{x,y})-f(\xi)\big], 
\]
where $\xi$ is a generic element of $\overline \Omega_n$, $f:\overline \Omega_n \to \bb R$ and $\xi^{x,y}$ is given by
\[
\xi^{x,y}(z) =
\begin{cases}
\xi(x)-1, &z=x\\
\xi(y)+1, &z=y\\
\xi(z), &z \neq x,y.\\
\end{cases}
\]

Notice that the number of particles in this process is preserved by the dynamics. Therefore, for any fixed initial configuration, the state space is finite, and the previous process is well defined. This process has a family of invariant measures which we describe as follows. Define $g(n)!= g(1)\cdots g(n)$, $g(0)!=0$. Assume that 
\[
\phi^*=\Big\{\limsup_{n \to \infty} \sqrt[n]{g(n)!}\Big\}^{-1}
\]
is non-zero. This is the fact if, for example, $\inf_{n \geq n_0} g(n) >0$ for some $n_0$. For any $\phi < \phi^*$, define the uniform product measure $\bar \nu_{\phi}$ in $\overline \Omega_n$ by
\[
\bar \nu_{\phi}\big\{ \xi; \xi(x) =k\big\} = \frac{1}{Z(\phi)} \frac{\phi^k}{g(k)!},
\]
where $Z(\phi)$ is the normalization constant. Notice that due to the fact that $\phi < \phi^*$, the normalization constant $Z(\phi)$ is finite. It is not hard to see that the measure $\bar \nu_\phi$ is invariant under the evolution of $\xi_t$. Observe that $\phi = \int g(\xi(x)) \bar \nu_{\phi}(d\xi)$.

Define the number of particles per site by $\rho(\phi) = \int \xi(x) \bar \nu_{\phi}(d\xi)$. Notice that the application $\phi \mapsto \rho(\phi)$ is strictly increasing, with $\rho(0)=0$. Therefore, $\phi \mapsto \rho(\phi)$ is a bijection between $[0,\phi^*)$ and $[0,\rho^*)$, where
\[
\rho^* = \lim_{\phi \uparrow \phi^*} \rho(\phi).
\]

Denote by $\rho \mapsto \phi(\rho)$ the inverse mapping of $\rho(\phi)$. 
Since the number of particles per site is a more natural quantity than $\phi$, we define $\nu_\rho = \bar \nu_{\phi(\rho)}$ for $\rho \in [0,\rho^*)$. 

Now we will introduce a Dirichlet-type boundary condition into this process. Fix some numbers $\alpha_i \in [0,\rho^*)$, $i=0,1,2$. Define the boundary operators by
\[
L_{zr}^i f(\xi) = \sum_{y \sim_n a_i} \Big\{ \phi(\alpha_i)\big[f(\xi+\delta_y) -f(\xi)\big] + g\big(\xi(y)\big) \big[f(\xi-\delta_y)-f(\xi)\big]\Big\},
\]
where the configurations $\xi \pm \delta_y$ are given by
\[
\big(\xi \pm \delta_y\big)(z) = 
\begin{cases}
\xi(z) \pm 1, &z=y\\
\xi(z), &z \neq y.\\
\end{cases}
\]

The zero-range process in $V_n$ with boundary conditions $\{\alpha_i\}_i$ is then defined as the continuous-time Markov process $\xi_t$ in $\Omega_n = \bb N_0^{V_n^0}$ and generated by the operator
\[
L_{zr} f(\xi) = \sum_{x \in V_n^0} \sum_{\substack{y \in V_n^0\\y \sim_n x}} g\big(\xi(x)\big) \big[f(\xi^{x,y})-f(\xi)\big] + \sum_{i=0,1,2} L_{zr}^i f(\xi).
\]

The difference between this process and the zero-range process defined previously is easy to understand. Inside $V_n$ (that is, in $V_n^0$), the dynamics is the same. Particles are coming from the boundary sites $a_i$ with intensity $\phi(\alpha_i)$, which corresponds to have a density of particles $\alpha_i$ at $a_i$. Particles are also annihilated when they jump into the boundary sites $a_i$, in order to keep the density of particles at $a_i$ fixed.

Notice now that the number of particles is not longer fixed, since particles are coming in and out at the boundary sites. In order to have a process with amenable properties, we will impose some technical conditions on the interaction rate $g(\cdot)$ (see \cite{KL}). We say that $g(\cdot)$ satisfies {\bf (SG)} condition if
\begin{itemize}
\item[i)] $\sup_k|g(k+1)-g(k)| <+\infty$,
\item[ii)] There exist $k_0>0$ and $a_0>0$ such that $g(k+l)-g(k) \geq a_2$ for any $l>k_0$.
\end{itemize}

Notice that in this case $\rho^* =+\infty$. This condition guarantees the existence of exponential moments for the occupation variables $\xi(x)$ under the invariant measures $\nu_\rho$. We say that $g(\cdot)$ satisfies {\bf (C)} condition if $g(k+1) \geq g(k)$ for any $k$. In this case, there exists a constant $\theta_0>0$ such that $\int\exp\{\theta_0 \xi(x)\}d\nu_\rho <+\infty$ for any $\rho \leq \rho^*$. We will assume throughout this article that $g(\cdot)$ satisfies {\bf (SG)} or {\bf (C)}. 

Condition {\bf (SG)} guarantees the existence of a {\em uniform spectral gap}, which states that the magnitude of the first non-null eigenvalue of $L_{zr}^b$ with respect to $\nu_\rho$ is bounded below by a constant that does not depend on the density $\rho$. We expect this constant to be of order $5^{-n}$. The non-validity of the moving particle lemma, discussed in the Introduction, prevent us to obtain such a bound. Notice, however, that a simple computation shows that there exists a constant $c_0$, independent of $n$ and $\rho$, such that the spectral with respect to $\nu_\rho$ gap is bounded below by $c_0 6^{-n}$.

Condition {\bf (C)} implies that the zero-range process is {\em attractive}, which means that, given two initial configurations $\xi$, $\xi'$ with $\xi(x) \leq \xi'(x)$ for any $x$, there exists a joint process $(\xi_t,\xi_t')$ such that $\xi_t$ is a zero-range process starting from $\xi$, $\xi_t'$ is a zero-range process starting from $\xi'$ and $\xi_t(x) \leq \xi_t'(x)$ for any $x \in V_n$ and any $t >0$. This property allows us to obtain moment bounds for $\xi_t(x)$ in terms of the invariant measures $\nu_\rho$.

Now a simple path argument shows that there is exactly one invariant measure for the evolution of $\xi_t$. Remarkably, this invariant measure is still of product form.
Consider the harmonic function $h$ with $h(a_i) = \alpha_i$. It is not hard to see that the non-uniform product measure $\nu_h$ defined by
\[
\nu_h \big\{\xi; \xi(x) = k\big\} = \frac{1}{Z(h(x))} \frac{h(x)^k}{g(k)!}
\]
is invariant and ergodic for the evolution of $\xi_t$.

\subsection{The $H_{-1}$-norm method: heuristics} 

Probably the simplest method to prove hydrodynamic limits for particle systems of gradient type is the $H_{-1}$-norm method introduced by Chang and Yau \cite{CY} (see \cite{GQ} for a more comprehensible approach). The main drawback of this method is that only works for strictly diffusive systems. The other alternatives are the so-called {\em entropy method} \cite{GPV} and {\em relative entropy method} \cite{Y}. As we discussed in the Introduction, The entropy method requires a path lemma that roughly states that we can move a particle from one site to another paying a diffusive cost. This is not true for the Sierpinski gasket, due to the presence of {\em hot spots}: points that connects two huge parts of the graph that can not be avoided in order to move a particle from one of these parts to the other. For example, if we want to transport a particle from a site in $\varphi_0(K)$ to other site in $\varphi_1(K)$, the particle has to pass by point $\varphi_0(a_1)$, or by points $\varphi_0(a_2)$, $\varphi_1(a_2)$. 

The second alternative requires smoothness of the solutions of the hydrodynamic equation. Of course, since we do not have a differentiable structure in $K$, we do not expect the solutions of the hydrodynamic equation to be smooth. 

The $H_{-1}$ method is based on the heuristic argument leading to uniqueness of the hydrodynamic equation (\ref{echid}). 
Remember that the idea was to prove that, for two solutions $u_t$, $v_t$ of (\ref{echid}), $||u_t-v_t||_{-1}^2$ is decreasing in time.  The main point is that this inequality also holds {\em at the microscopic level}, that is, for two different versions $\xi_t^1$, $\xi_t^2$ of the zero-range process, or even between $u_t$ and $\xi_t$. 

Our task will be to put this formal arguing into a rigorous proof. Before doing that, we need some definitions.
We recall the formula for the norm in $H_{-1}(K)$ in terms of the Green function $\mc G$: for a function (or even a measure) $u: K \to \bb R$,
\[
||u||_{-1}^2 = \iint_{K \times K} u(x) u(y) \mc G(x,y) \nu(dx)\nu(dy).
\]

More important for us will be the discrete version of this formula: for $u: V_n \to \bb R$ such that $u(a_i)=0$,
\[
||u||_{-1,n}^2 = \frac{1}{3^{2n}} \sum_{x,y \in V_n} u(x) u(y) \mc G(x,y).
\]

Now we define what we understand by ``convergence in $H_{-1}$''. Let $\{\nu^n\}_n$ be a sequence of measures in $\Omega_n$. Let $u: K \to [0,\infty)$ be a given function. We say that $\nu^n$ converges to $u$ in the $H_{-1}$ sense if 
\[
\lim_{n \to \infty} \int ||\xi -u||_{-1,n}^2 \nu^n(d\xi) =0.
\]

A simple computation shows that the {\em local equilibrium} measures $\nu^n_{u(\cdot)}$ defined as the product measures in $\Omega_n$ with marginals
\[
\nu^n_{u(\cdot)} \big\{ \xi; \xi(x) =k\big\} = \nu_{u(x)}\big\{\xi;\xi(x)=k\big\}
\]
converge to $u$ in the $H_{-1}$ sense.	

For two given measures $\nu$, $\nu'$ in $\Omega_n^{zr}$, we define the relative entropy of $\nu$ with respect to $\nu'$ by
\[
H(\nu|\nu') =
\begin{cases}
\int \frac{d\nu}{d\nu'}\log \frac{d\nu}{d\nu'} d\nu', &\text{ if } \nu << \nu'\\
+\infty, &\text{ otherwise.}
\end{cases}
\] 

We also say that $\nu$ is {\em stochastically dominated} by $\nu'$ if there is a measure $\lambda$ in $\Omega_n\times \Omega_n$ such that 
\begin{itemize}
\item[i)] $\lambda(\xi,\Omega_n) = \nu(\xi)$
\item[ii)] $\lambda(\Omega_n,\xi) = \nu'(\xi)$
\item[iii)] $\lambda\{(\xi,\xi'); \xi(x) \leq \xi'(x) \text{ for any } x \in V_n^0\}=1$. 
\end{itemize}

Now we are ready to state the main result of this article.

\begin{theorem}
\label{t1}
Let $\{\nu^n\}_n$ be a sequence of probability measures in $\Omega_n^{zr}$, converging in the $H_{-1}$ sense to some bounded function $u_0: K \to [0,\infty)$. Assume that the hydrodynamic equation has a unique solution. Assume also the technical conditions:
\begin{itemize}
\item[i)] Under {\bf (SG)}, there are positive constants $\kappa$, $\rho$ such that $H(\nu^n|\nu_\rho) \leq \kappa 3^n$.
\item[ii)] Under {\bf (C)}, there are two positive constants $\rho<\rho'$ such that $\nu_\rho$ is stochastically dominated by $\nu^n$ and $\nu^n$ is stochastically dominated by $\nu_{\rho'}$ for any $n >0$.
\end{itemize}

Then, for any $t>0$, the distributions $\{\nu^n(t)\}_n$ at time $t>0$ of the rescaled process $\xi_t^n = \xi_{5^nt}$ in $\Omega_n$ starting from $\nu^n$, converge in the $H_{-1}$ sense to $u(t,\cdot)$, solution of the hydrodynamic equation (\ref{echid}).
\end{theorem}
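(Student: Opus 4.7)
The plan is to execute the heuristic $H_{-1}$-argument appearing around \eqref{ec7} in the uniqueness proof of Section \ref{s2}, but with the microscopic process $\xi_t^n$ in place of one of the two weak solutions. I will compare $\xi_t^n$ against a spatial discretization $u_t^n$ of the PDE solution (for instance the finite-difference scheme of Section \ref{s2}) and show that
\begin{equation*}
D_n(t) := E_{\nu^n}\bigl[ ||\xi_t^n - u_t^n||_{-1,n}^2\bigr] \le D_n(0) + o_n(1)
\end{equation*}
uniformly on $[0,T]$. Since $D_n(0) \to 0$ by the $H_{-1}$-convergence hypothesis on $\nu^n$, and since $u_t^n \to u_t$ in $\mc L^2(\mu)$, this yields the announced convergence; uniqueness of the weak solution (assumed, and proved in Section \ref{s2}) upgrades the argument from subsequential to full convergence.

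Apply Dynkin's formula to the functional $F(\xi) := ||\xi - u_t^n||_{-1,n}^2$, carefully tracking the time dependence of $u_t^n$. The rescaled drift of $\xi_t^n(x) - u_t^n(x)$ at an interior site equals $\Delta_n \bigl( g(\xi_t^n) - \phi(u_t^n) \bigr)(x)$. Two integrations by parts against the Green kernel, using the key identity $\Delta_n \mc G_n(\cdot,y) = \pm 3^n\delta_y$ on $V_n^0$ and symmetry of $\Delta_n$, collapse the drift contribution to
\begin{equation*}
-\,\frac{2}{3^n} \sum_{x \in V_n^0} \bigl(\xi_t^n(x) - u_t^n(x)\bigr)\bigl(g(\xi_t^n(x)) - \phi(u_t^n(x))\bigr),
\end{equation*}
modulo boundary corrections absorbed by the $\alpha_i$ terms in the non-homogeneous scheme. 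Now invoke the local one-block estimate of \cite{JLS} (valid here because $d_H = \log 3/\log 2 < 2$) to replace $g(\xi_t^n(x))$ and $\xi_t^n(x)$ by $\phi$ and the identity applied to a local empirical average of $\xi_t^n$ on a triangle of depth $\ell$ around $x$. Monotonicity of $\phi$ then renders the drift non-positive, up to a one-block error that vanishes as $n\to\infty$ followed by $\ell\to\infty$.

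The second and hardest step is the quadratic-variation correction from the jumps of $\xi_t^n$. A bond-by-bond computation rewrites it as
\begin{equation*}
\frac{1}{3^n} \sum_{x \in V_n^0} \bigl( \Delta_n \mc G(\cdot,x)(x)\bigr)\, g(\xi_t^n(x))\, dt,
\end{equation*}
which is exactly where the diagonal singularity $\Delta \mc G(x,x) = +\infty$ enters. I would control it by combining (a) the quantitative estimates on $\Delta_n \mc G(x,x)$ from the Appendix, (b) uniform moment bounds on $g(\xi_t^n(x))$ from the invariant measures, obtained via the entropy inequality and the $c_0 6^{-n}$ spectral gap under hypothesis (i), or via attractiveness and stochastic domination under hypothesis (ii), and (c) once more the local one-block estimate, applied without smoothness of the weight $\Delta_n \mc G(x,x)$. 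Step (c) is the true reason the proof requires $d_H<2$: a classical one-block estimate with a continuous test function is unavailable precisely because $\Delta \mc G(x,x)$ fails to be continuous.

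Assembling the pieces, I integrate in $t$, apply Gronwall, and let $n \to \infty$ then $\ell \to \infty$. The convergence $u_t^n \to u_t$ in $\mc L^2(\mu)$ implies $||u_t^n - u_t||_{-1,n} \to 0$, which together with $D_n(t) \to 0$ gives $E_{\nu^n}\bigl[||\xi_t^n - u_t||_{-1,n}^2\bigr] \to 0$. I expect the main obstacle to be the quadratic-variation step: showing that the $\Delta_n \mc G(x,x)$-weighted sum is negligible even though its test function diverges, which is where the subdiffusive fractal geometry of $K$ forces the argument to differ essentially from the Euclidean case and where the local one-block estimate together with the dimensional condition $d_H<2$ are both indispensable.
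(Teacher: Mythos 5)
Your overall architecture matches the paper's: Dynkin's formula applied to $\|\xi_s-u_s^n\|_{-1,n}^2$, a drift term that collapses after two summations by parts against $\mc G_n$ to $-\tfrac{2}{3^n}\sum_x(g(\xi)-\phi(u))(\xi-u)$ plus an It\^o-type correction carrying the diagonal $\mc G(x,x)$, the local one-block estimate of \cite{JLS} to make the drift sign-definite (the paper's Lemma \ref{l1}), and no Gronwall needed beyond integrating the signed inequality. Up to that point you are reconstructing Sections \ref{s3}--\ref{s4} faithfully.

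However, there is a genuine gap in your treatment of the diagonal correction term, i.e.\ the paper's Lemma \ref{l2}. Your plan is to control $\tfrac{1}{3^{2n}}\sum_x g(\xi_s(x))\,\Delta_n\mc G(x,x)$ by (a) the Appendix estimates on $\Delta_n\mc G(x,x)$, (b) moment bounds on $g$, and (c) the one-block estimate. But these three ingredients only show the term is \emph{bounded}, not that it vanishes: the Appendix computation gives $\Delta_n\mc G(x,x)/3^n\to 3/7>0$ for every fixed $x$, so after the one-block replacement $g(\xi_s(x))\rightsquigarrow\phi(\xi_s^k(x))$ the quantity $\tfrac{1}{3^n}\sum_x\phi(\xi_s^k(x))\,\Delta_n\mc G(x,x)/3^n$ converges to something of the order of $\tfrac{3}{7}\int\phi(u)\,d\mu$, which is strictly positive and of order one --- it would destroy the monotonicity of $\|\xi_t-u_t^n\|_{-1,n}^2$ rather than preserve it. The missing idea is the step the paper singles out as the point of the whole construction: once $g(\xi_s(x))$ has been replaced by $\phi(\xi_s^k(x))$, the new coefficient is \emph{constant on each triangle} $\mc T_n^k(x)$, so one can sum $\Delta_n\mc G(\cdot,\cdot)$ by parts over each such triangle; the interior Laplacians telescope and only the outer normal derivatives $\partial_{n,k}^i\mc G$ at the three vertices survive, divided by $|V_k|$. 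Those normal derivatives are uniformly bounded (by the same elementary comparison $|\mc G(x,x)-\mc G(y,y)|\le 3^n$ for $x\sim_n y$ that gives $|\Delta_n\mc G(x,x)/3^n|\le 4$ --- the Appendix is not actually used in the proof), so the whole term is $O(|V_k|^{-1})=O(3^{-k})$ and vanishes as $k\to\infty$. Without this integration by parts over blocks of scale $k$ there is no mechanism in your argument that makes the diagonal term small, and your step (c) as stated cannot supply one, since the one-block estimate only substitutes $\phi(\xi^k)$ for $g(\xi)$ and does nothing to the weight $\Delta_n\mc G(x,x)$.
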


Under Condition ii), we have $H(\nu^n|\nu_\rho) \leq H(\nu^{\rho'}|\nu_\rho)$ and therefore there is a constant $\kappa$ such that $H(\nu^n|\nu_\rho) \leq \kappa 3^n$ for any $n$.

\subsection{The $H_{-1}$-norm method: martingale representation}
In this section we will obtain a martingale representation for the $H_{-1}$-norm of $\xi_t^n$. Notice that, due to the time-scaling, the generator of $\xi_t^n$ is equal to $5^n L_{zr}$. For any function $F: [0,T] \times \Omega_n^{zr} \to  \bb R$, differentiable in time and linearly growing in $\xi$, Dynkin's formula states that
\[
M_t^{n,F} =: F(t,\xi_t^n) -F(0,\xi_0^n) - \int_0^t \big\{\partial_t+5^n L_{zr}\big\} F(s,\xi_s^n) ds
\]
is a martingale. A long and tedious, but totally elementary computation shows that, in fact,
\[
\big\{\partial_t + 5^n L_{zr}\big\} ||\xi_s-u_s^n||_{-1,n}^2 
	= -\frac{2}{3^n} \sum_{x \in V_n} \mc F\big(\xi_s^n(x), u_s^n(x)\big) 
	+ \frac{1}{3^{2n}} \sum_{x \in V_n} g\big(\xi_s(x)\big) \Delta_n \mc G(x),
\]
where $\mc G(x) = \mc G(x,x)$ and 
\[
\mc F(\xi,u) = \big(g(\xi)-\phi(u)\big)\big(\xi-u\big) - g(\xi).
\]

Observe that $\int \mc F\big(\xi(x),u\big) \nu_\rho(d\xi) \geq 0$ for any $\rho \in [0,\rho^*)$ and any $u \geq 0$. In particular, $\int \mc F\big(\xi(x),u\big) \nu_{u(\cdot)}^n(d\xi) \geq 0$ for any $x \in V_n$. Therefore, 
\begin{align*}
M_t^n 
	&= ||\xi_t-u_t^n||_{-1,n}^2-||\xi_0-u_0^n||_{-1,n}^2 \\
	&+\int_0^t \Big\{\frac{2}{3^n} \sum_{x \in V_n} \mc F\big(\xi_s^n(x), u_s^n(x)\big) 
	- \frac{1}{3^{2n}} \sum_{x \in V_n} g\big(\xi_s(x)\big) \Delta_n \mc G(x)\Big\}ds
\end{align*}
is a martingale. Since $M_0^n=0$, we have that $\bb E_n[M_t^n]=0$ for any $t>0$. Here and below, $\bb P_n$ denotes the distribution of the process $\xi_t^n$ starting from $\nu^n$, and $\bb E_n$ denotes expectation with respect to $\bb P_n$.

Taking the expectation with respect to $\bb P_n$ of the previous identity, we can obtain an expression for $\bb E_n||\xi_t - u_t^n||_{-1,n}^2$:
\begin{align*}
\bb E_n||\xi_t-u_t^n||_{-1,n}^2
	&= \bb E_n ||\xi_0 -u_0^n||_{-1,n}^2 
	- \bb E_n \int_0^t \frac{2}{3^n} \sum_{x \in V_n} \mc F\big(\xi_s^n(x), u_s^n(x)\big) ds\\
	&+ \bb E_n \int_0^t \frac{1}{3^{2n}} \sum_{x \in V_n} g\big(\xi_s(x)\big) \Delta_n \mc G(x)ds.\\
\end{align*}

Theorem \ref{t1} is an immediate consequence of the following two lemmas:

\begin{lemma}
\label{l1}
\[
\lim_{n \to \infty} \bb E_n \int_0^t \frac{2}{3^n} \sum_{x \in V_n} \mc F\big(\xi_s^n(x), u_s^n(x)\big) ds \geq 0.
\]
\end{lemma}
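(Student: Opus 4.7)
The plan is to exploit the fact that $\mc F(\xi,u)$ is non-negative in expectation under the grand-canonical local equilibrium, even though it is not pointwise non-negative. A direct computation using the explicit form of $\nu_\rho$---in particular $\int g(\xi(x))\nu_\rho(d\xi) = \phi(\rho)$ together with $\int \xi(x) g(\xi(x))\nu_\rho(d\xi) = \phi(\rho)(\rho+1)$, which follows from a re-indexing $k \mapsto k-1$ in the sum---gives
\[
\tilde{\mc F}(\rho,u) := \int \mc F(\xi(x),u)\,\nu_\rho(d\xi) = (\phi(\rho)-\phi(u))(\rho-u) \geq 0,
\]
the last inequality being monotonicity of $\phi$. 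The $-g(\xi)$ summand in the definition of $\mc F$ is precisely what cancels the extra $+\phi(\rho)$ that would otherwise arise from $\int \xi g(\xi)\,d\nu_\rho$, leaving this clean bilinear expression.

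The heart of the proof will be to replace the microscopic quantity $\mc F(\xi^n_s(x),u^n_s(x))$ by the macroscopic $\tilde{\mc F}(\bar\xi^{\ell}_s(x),u^n_s(x))$, where $\bar\xi^{\ell}(x)$ denotes the empirical density of particles in a suitable mesoscopic triangular block of $V_n$ containing $x$. This is the content of the one-block estimate, to be established in Section \ref{s4}. Crucially, in the time-space sum $\frac{1}{3^n}\sum_{x\in V_n} \mc F(\xi^n_s(x),u^n_s(x))$ the ``test function'' is the constant $1$, so we do not need the more delicate version of the estimate that averages against a discontinuous test function (that sharper version will only be needed when dealing with the Green-function term of Lemma \ref{l1}'s companion). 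The entropy bound under hypothesis \textbf{(SG)}---which by standard entropy dissipation propagates in time---and the stochastic domination under \textbf{(C)} provide the uniform moment control needed to make the replacement error quantitative. H\"older continuity of $u_s^n$, obtained by applying Proposition \ref{p1} to its harmonic extension, ensures that $u^n_s(y)\approx u^n_s(x)$ when $y$ lies in the block around $x$, so that $\tilde{\mc F}(\cdot,u^n_s(y))$ can be replaced by $\tilde{\mc F}(\cdot,u^n_s(x))$ on that block with an error that vanishes as $\ell \to \infty$.

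Sending first $n \to \infty$ and then $\ell \to \infty$, both the one-block replacement error and the block-continuity error disappear. Since $\tilde{\mc F}(\rho,u) \geq 0$ for all $\rho,u \geq 0$, what remains is the $\liminf$ of a non-negative quantity, which is trivially non-negative; this is the desired conclusion.

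I expect the main obstacle to be the one-block estimate itself on the Sierpinski gasket, which is deferred to Section \ref{s4}. The non-homogeneity of the graphs $\Gamma_n$ rules out the standard arguments available on $\bb Z^d$ (in particular the two-blocks reduction fails, as pointed out in the Introduction), and one must rely on the \emph{local} form of the estimate which, as noted there, is available precisely because $d_H = \log 3/\log 2 < 2$. Once this ingredient is in hand, the remaining steps above are essentially bookkeeping involving the entropy inequality and the Hölder estimate of Proposition \ref{p1}.
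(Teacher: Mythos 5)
Your proposal follows essentially the same route as the paper: both apply the one-block estimate of Theorem \ref{t2} to replace $g(\xi_s(x))$ and $\xi_s(x)g(\xi_s(x))$ by their grand-canonical counterparts, reducing the summand to $\big(\phi(\xi_s^k(x))-\phi(u_s^n(x))\big)\big(\xi_s^k(x)-u_s^n(x)\big)\ge 0$, and then conclude by positivity after sending $n\to\infty$ and then the block size to infinity. Your explicit computation of $\tilde{\mc F}(\rho,u)=(\phi(\rho)-\phi(u))(\rho-u)$ and the remark on H\"older continuity of $u_s^n$ merely spell out steps the paper leaves implicit, so this is the same proof.
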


\begin{lemma}
\label{l2}
\[
\lim_{n \to \infty} \bb E_n \int_0^t \frac{1}{3^{2n}} \sum_{x \in V_n} g\big(\xi_s(x)\big) \Delta_n \mc G(x)ds \leq 0.
\]
\end{lemma}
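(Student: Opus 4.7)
The plan is to show the limit in Lemma~\ref{l2} equals $0$, which is stronger than the claimed inequality. The key algebraic observation is that the defining identity $-\Delta_n\mc G_n(\cdot,x)(x)=3^n$ for $x\in V_n^0$, combined with the symmetry $\mc G(x,y)=\mc G(y,x)$, yields the decomposition
\[
\Delta_n\mc G(x) = R_n(x) - 3^n,\qquad R_n(x) := 5^n\sum_{y\sim_n x}\bigl[\mc G(y,y)-\mc G(x,y)\bigr]\geq 0,
\]
where non-negativity uses the pointwise bound $\mc G(x,y)\leq\mc G(y,y)$. Under the assumption $\alpha_i=0$ we have $\phi(\alpha_i)=0$, so that both $g(\xi_s)$ and $\phi(u_s^n)$ vanish on $V_0$; discrete integration by parts (with no surface contribution) then gives, for any such $f$,
\[
\frac{1}{3^{2n}}\sum_{x\in V_n}f(x)\Delta_n\mc G(x) \;=\; -\frac{1}{3^n}\mc E_n(f,\mc G).
\]

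Applied with $f=\phi(u_s^n)$, I would estimate the right-hand side by Cauchy--Schwarz, $|\mc E_n(\phi(u_s^n),\mc G)|\leq\mc E_n(\phi(u_s^n),\phi(u_s^n))^{1/2}\mc E_n(\mc G,\mc G)^{1/2}$. The first factor has bounded time integral by the a priori estimate~\eqref{ec4} together with $\mc E_n(\phi(u))\leq(\sup\phi')^2\mc E_n(u)$. For the second, Proposition~\ref{p1} applied to $\mc G(\cdot,y)$ yields $|\mc G(y,y)-\mc G(x,x)|\leq c\cdot 2^{-\alpha n}=c\,(3/5)^{n/2}$ whenever $y\sim_n x$; summing the square over the $3^{n+1}$ edges of $\Gamma_n$ gives $\mc E_n(\mc G,\mc G)\leq C\cdot 3^n$. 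Hence
\[
\int_0^t\frac{1}{3^n}\bigl|\mc E_n(\phi(u_s^n),\mc G)\bigr|\,ds \;\leq\; \frac{C}{3^{n/2}}\Bigl(\int_0^t\mc E_n(\phi(u_s^n),\phi(u_s^n))\,ds\Bigr)^{1/2} \xrightarrow{n\to\infty} 0.
\]

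The remaining task is to justify replacing $g(\xi_s(\cdot))$ by $\phi(u_s^n(\cdot))$ inside the expectation $\bb E_n$, and this is where I expect the real work to lie. For this I would invoke the \emph{local} one-block estimate of \cite{JLS}, which provides a pointwise (rather than test-function-averaged) replacement with per-site error vanishing as $n\to\infty$ under the entropy bound $H(\nu^n|\nu_\rho)\leq\kappa 3^n$ (condition (SG)) or stochastic domination (condition (C)). These per-site errors aggregate to $o(1)$ because the total weight $\sum_x|\Delta_n\mc G(x)|/3^{2n}$ is $O(1)$, which follows from the count $\sum_{x\in V_n^0}R_n(x)=3^n|V_n^0|+O(5^n)$ obtained by swapping the order of summation and using the Green function equation. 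This replacement step is the principal difficulty: because $\Delta_n\mc G(x)$ has no continuous limit (the Appendix shows $\Delta\mc G=+\infty$), the classical one-block estimate --- which averages against continuous test functions --- is inapplicable; the argument depends crucially on the Hausdorff dimension $d_H=\log 3/\log 2<2$ falling within the range covered by~\cite{JLS}. The technical details are deferred to Section~\ref{s4}.
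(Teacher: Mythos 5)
Your proof has a genuine gap at the step you yourself defer to Section \ref{s4}: the replacement of $g(\xi_s(x))$ by $\phi(u_s^n(x))$. The local one-block estimate (Theorem \ref{t2}) does not provide this. It replaces $g(\xi_s(x))$ by $\phi(\xi_s^k(x))$, where $\xi_s^k(x)$ is the \emph{empirical} particle density over the microscopic block $\mc T_n^k(x)$ --- a random functional of the configuration, not the deterministic discrete solution $u_s^n(x)$. Passing from $\phi(\xi_s^k(x))$ to $\phi(u_s^n(x))$ amounts to asserting that the local density of the process tracks $u_s^n$, which is (a strengthening of) the conclusion of Theorem \ref{t1}; the whole purpose of the $H_{-1}$-norm method is to avoid ever making that identification and instead to close a differential inequality for $||\xi_t-u_t^n||_{-1,n}^2$. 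As written, your argument is circular at its crucial step, and the clean Cauchy--Schwarz estimate is applied to an integrand you cannot legitimately reach: the a priori bound (\ref{ec4}) controls $\int_0^T\mc E_n(\phi(u_s^n),\phi(u_s^n))\,dt$, but that quantity never enters a correct proof of Lemma \ref{l2}.

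What does survive is your summation-by-parts identity $\frac{1}{3^{2n}}\sum_x f(x)\Delta_n\mc G(x)=-\frac{1}{3^n}\mc E_n(f,\mc G)$, which is a legitimate repackaging of the paper's mechanism \emph{provided} it is applied with $f=\phi(\xi_s^k(\cdot))$. The paper uses that this field is constant on each triangle $\mc T_n^k(x)$ and sums $\Delta_n\mc G$ by parts within each triangle, leaving only outer normal derivatives of $\mc G$ at the three corners and gaining the factor $|V_k|^{-1}$. In your language: $\mc E_n(\phi(\xi_s^k),\mc G)$ only charges the $O(3^{n-k})$ edges adjacent to block corners, with increments bounded after the cut-off $\xi_s^k\le a$, so $\mc E_n(\phi(\xi_s^k),\phi(\xi_s^k))=O\bigl((5/3)^n3^{n-k}\bigr)$. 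Note, however, that your bound $\mc E_n(\mc G,\mc G)\le C3^n$ (from H\"older continuity of $\mc G(\cdot,y)$) is then too weak: you need the sharper nearest-neighbour estimate $|\mc G(x,x)-\mc G(y,y)|\le(3/5)^n$, which the paper extracts from $\Delta_n\mc G(x,x)=-3^n$ together with $\mc G(x,y)\le\mc G(x,x)\wedge\mc G(y,y)$, giving $\mc E_n(\mc G,\mc G)=O\bigl(3^n(3/5)^n\bigr)$ and hence a total contribution $O(3^{-k/2})$, vanishing as $k\to\infty$ uniformly in $n$. Your decomposition $\Delta_n\mc G=R_n-3^n$ and the aggregate bound $\sum_x|\Delta_n\mc G(x)|/3^{2n}=O(1)$ are correct but end up playing no role; the paper instead uses the pointwise bound $|\Delta_n\mc G(x)|/3^n\le 4$ to license the one-block replacement under the weight $3^{-n}\sum_x$.
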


In fact, from these two lemmas, we conclude that 
\[
\limsup_n \bb E_n||\xi_t-u_t^n||_{-1,n}^2 \leq \limsup_{n \to \infty} \bb E_n||\xi_0-u_0^n||_{-1,n}^2,
\]
and convergence in the $H_{-1}$ sense follows at once.  The argument behind the proof of these two lemmas is as follows. We will see that some sort of weak conservation of local equilibrium will allow us to replace in the previous expressions the functions $g(\xi_s(x))$ by $\phi(\xi_s^k(x))$, where the symbol $\xi_s^k(x)$ denotes the average of $\xi_s(y)$ over a small triangle containing $x$, paying a price that vanishes when $n \to \infty$ and then $k \to \infty$. In the same way we can substitute $\mc F(\xi_s(x),u_t^n(x))$ by $\phi(\xi_s^k(x)) - \phi(u_t^n(x))(\xi_s^k(x) -u_t^n(x))$. This last term is always positive, and we are in position to prove Lemma \ref{l1}. The proof of Lemma \ref{l2} is more subtle. Notice the huge factor $1/3^n$ in front of the average in Lemma \ref{l2}. Since $\mc G(x,y)$ satisfies $\sup_y||\mc G(\cdot,y)||_{1}^2 <+\infty$, we could guess that $\mc G(\cdot)$ is in $H_1(K)$. In that case, we should have $\Delta_n \mc G(x)/3^n \to 0$ as $n \to \infty$ in some convenient sense. It turns out that this is {\em not} the case. The function $\mc G(x)$ is extremely irregular, and in fact it can be proved that for each $x \in V^*$ fixed, $\Delta_n \mc G(x)/3^n \to 3/7$ as $n \to \infty$. Replacing $g(\xi_s(x))$ by $\phi(\xi_s^k(x))$ we will be able to perform an integration by parts in a small triangle of size $k$, therefore gaining a factor $3^k$ that will save the day at the end.  

We will devote the following two sections to the proof of each one of these two lemmas.

\section{The one-block estimate}
\label{s4}
The replacement mentioned in the previous section is known in the literature of interacting particle systems as the {\em one-block estimate}. Before stating the one-block estimate in a precise way, we need some definitions. 

Fix two integers $n \geq l \geq 0$. For $x \in V_n \setminus V_{n-l}$, we define $\mc T_n^l(x)$ as the set of points in $V_n$ contained in the triangle $\Delta(x_0,x_1,x_2)$, which contains $x$ and has vertices in $V_{n-l}$. In an equivalent way, $\mc T_n^l(x) = \mc T \cap V_n$, where $\mc T$ is the unique triangle of the form $\varphi_{i_{n-l}} \circ \cdots \circ \varphi_{i_1}(K)$ containing $x$. For $x \in V_{n-l}$, there are two possible choices for $\mc T_n^l(x)$, given by two triangles intersecting exactly at $x$. Rotating the graph in such a way that both triangles lie on the upper semiplane, we choose $\mc T_n^l(x)$ as the triangle at the right of $x$. The exact choice in this case is not important, the point is to choose each triangle the same number of times. 

\begin{theorem}[Local one-block]
Let us define the average number of particles $\xi_s^k(x)$ by
\[
\xi_s^k(x) = \frac{1}{|V_k|} \sum_{y \in \mc T_n^k(x)} \xi_s(y),
\]
where $|V_k|$ denotes the cardinality of $V_k$ (and also of $\mc T_n^k(x)$). Then,
\begin{equation}
\label{ec8}
\lim_{k \to \infty} \limsup_{n \to \infty} \sup_{x \in V_n} \bb E_n \Big| \int_0^t \Big\{ g(\xi_s(x)) - \phi\big(\xi_s^k(x)\big)\Big\}ds\Big| =0,  
\end{equation}

\begin{equation}
\label{ec9}
\lim_{k \to \infty} \limsup_{n \to \infty} \sup_{x \in V_n} \bb E_n \Big| \int_0^t \Big\{ \xi_s(x) g(\xi_s(x)) - \phi\big(\xi_s^k(x)\big)\big(1+ \xi_s^k(x)\big)\Big\}ds\Big| =0.
\end{equation}
\label{t2}
\end{theorem}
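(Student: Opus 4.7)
Both statements \eqref{ec8} and \eqref{ec9} have the common form
\[
\sup_{x \in V_n} \bb E_n \Big| \int_0^t V_x(\xi_s^n) ds\Big| \longrightarrow 0\quad\text{as } n\to\infty \text{ and then } k\to\infty,
\]
where $V_x$ is a local function supported on $\mc T_n^k(x)$ with vanishing mean under every grand canonical measure $\nu_\rho$: in \eqref{ec8} because $\int g(\xi(x))d\nu_\rho = \phi(\rho)$ by the definition of $\phi$, and in \eqref{ec9} because $\int \xi(x) g(\xi(x)) d\nu_\rho = \phi(\rho)(1+\rho)$ by a direct computation with the product marginals. The plan is to treat both cases simultaneously via the Kipnis--Varadhan / Feynman--Kac route familiar from the standard one-block estimate, adapted to the fractal scaling.

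First I would carry out the standard reduction. Under (SG) the entropy inequality combined with $H(\nu^n|\nu_\rho) \leq \kappa 3^n$ replaces $\bb E_n$ by an equilibrium expectation at cost $\kappa 3^n/A$ with $A>0$ free; under (C), attractiveness and the stochastic domination bounds serve the same purpose. Feynman--Kac then reduces the resulting exponential moment to a time integral of
\[
\Lambda_n(A,x) = \sup_{\|f\|_{L^2(\nu_\rho)} = 1} \Big\{ A \int V_x f^2 d\nu_\rho - 5^n \mc D_n(f) \Big\}.
\]
Since $V_x$ depends only on $\xi|_{\mc T_n^k(x)}$, conditioning on this restriction and using convexity of the Dirichlet form reduces the variational problem to one over functions on the finite graph $\mc T_n^k(x)$, which is combinatorially a copy of $V_k$, independent of $n$. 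I would then decompose the local marginal of $\nu_\rho$ according to the total occupation $N$ of $\mc T_n^k(x)$: on each canonical slice the zero-range chain has spectral gap $\gamma_k > 0$ bounded uniformly in $N$ on bounded density ranges, and a Kipnis--Varadhan style $H_{-1}$ bound gives
\[
\Lambda_n(A,x) \leq A \sup_N |\bar V_{N,k}| + \frac{C(k) A^2}{5^n},
\]
where $\bar V_{N,k}$ is the canonical expectation of $V_x$ given $\sum_{y \in \mc T_n^k(x)} \xi(y) = N$.

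Since $\xi_s^k(x) \equiv N/|V_k|$ on the canonical slice, $\bar V_{N,k}$ reduces to $E_{\mc T,N}[g(\xi(x))]-\phi(N/|V_k|)$ in \eqref{ec8} (and to the analogous difference in \eqref{ec9}), and the equivalence of ensembles on the fixed finite graph $V_k$ shows that this quantity tends to $0$ as $k \to \infty$ uniformly over bounded $N/|V_k|$; the unbounded-density tail is controlled through the exponential moments granted by (SG) or (C). Choosing $A = 3^n/t$ in the entropy step then makes the entropy cost $O(1)$, while the Feynman--Kac contribution is at most $O\big(t \sup_N |\bar V_{N,k}|\big) + O\big(t \cdot 3^n/5^n \cdot C(k)\big)$, which vanishes on letting $n\to \infty$ first and then $k\to \infty$. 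The main obstacle --- and the reason the \emph{local} one-block estimate succeeds in this setting --- is precisely the quantitative balance $3^n \ll 5^n$, i.e.\ that the Hausdorff dimension $d_H = \log 3/\log 2$ is strictly smaller than the walk dimension $\log 5/\log 2$ of Brownian motion on $K$; this is the $d < 2$ condition of \cite{JLS}. Without this strict inequality the entropy cost cannot be absorbed into the Dirichlet-form penalty, forcing one back onto the weaker test-function-averaged version of the one-block estimate.
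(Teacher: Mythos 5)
Your overall architecture coincides with the paper's: truncate large densities, apply the entropy inequality with the $3^n$ entropy bound, pass to an eigenvalue problem via Feynman--Kac, localize the variational formula to the triangle $\mc T_n^k(x)$ by conditioning and by positivity of the Dirichlet form, reduce to canonical measures on the slices of fixed particle number, and finish with the equivalence of ensembles, the whole scheme working because $3^n \ll 5^n$. The one genuinely different ingredient is the middle step: where the paper exploits compactness of the truncated state space $\{\xi \in \bb N_0^{V_k} : \xi^k \leq a\}$ to exchange $\sup_f$ and $\limsup_n$, so that the diverging factor $(5/3)^n$ in front of the Dirichlet form forces every limiting optimizer to be invariant (constant on each slice $\Omega_{k,l}$) with no quantitative gap needed, you invoke a canonical spectral gap $\gamma_k>0$ on $V_k$ together with a Kipnis--Varadhan perturbation bound. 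That route can be made to work (the paper even records the crude lower bound of order $6^{-k}$ for the gap), but it imports an input the paper deliberately avoids, and under condition {\bf (C)} alone the uniformity of $\gamma_k$ over the canonical slices is precisely the sort of statement one would rather not have to prove; the compactness argument is both more elementary and the one actually used.

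There is, however, one step in your write-up that fails as stated: the choice $A = 3^n/t$. With that choice the entropy term $H(\bb P_n|\bb P_{inv})/A \leq \bar\kappa 3^n /(3^n/t) = \bar\kappa t$ is $O(1)$ but does not vanish, so your final estimate only yields a bound of order $\bar\kappa t$, not $0$. You must take $A = \gamma 3^n$ with an extra free parameter $\gamma$, check that the Feynman--Kac contribution still vanishes as $n\to\infty$ and then $k\to\infty$ for each fixed $\gamma$ (it does: your perturbative remainder becomes $\gamma 3^n C(k)/5^n \to 0$), and only then send $\gamma\to\infty$ to kill the residual entropy cost $\bar\kappa/\gamma$. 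This is the order of limits the paper uses implicitly. A smaller point: the density cut-off $\mathbf 1(\xi_s^k(x)\leq a)$ must be introduced explicitly \emph{before} the entropy and variational steps (the paper does this via attractiveness and comparison with $\nu_{\rho'}$ under {\bf (C)}), because both your spectral gap claim and the equivalence of ensembles are uniform only over bounded densities $N/|V_k|\leq a$; deferring the tail to a closing remark about exponential moments leaves the variational problem posed over an unbounded set where neither input applies.
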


The word ``local'' comes from the fact that in the usual version of the one-block estimate, the arguments in the integrals are averaging against a smooth test function. This local version of the one-block estimate was introduced in \cite{JLS}. As stated in \cite{JLS}, this local one-block estimate is available only in dimension $d <2$. The Sierpinski gasket $K$ has Hausdorff dimension $d_H = \log(3/2) <2$. But this is not really the point. The local one-block estimate holds each time the scaling of the process is faster than the scaling of the number of points. In our present situation, the process scales like $5^n$ and the number of points scales like $3^n$, so the local one-block estimate will hold. 

\subsection{Proof of the one-block estimate}

We will take the proof of Theorem \ref{t2} from \cite{JLS}. In that paper the case on which condition {\bf (SG)} is satisfied is treated in detail, so here we focus on condition {\bf (C)}. Our first step is to introduce a cut-off function that prevents us to have too many particles at site $x$. Take $a > \rho'$. Then,
\begin{align*}
\bb E_n \Big| \int_0^t \Big\{ g(\xi_s(x)) - \phi\big(\xi_s^k(x)\big)\Big\}
	&{\mathbf 1}(\xi_s^k(x) \geq a) ds\Big| \leq \\
	&\leq \bb E_n \int_0^t \Big\{ g(\xi_s(x)) + \phi\big(\xi_s^k(x)\big)\Big\}{\mathbf 1}(\xi_s^k(x) \geq a) ds\\
	&\leq \bb E_{\rho'} \int_0^t \Big\{ g(\xi_s(x)) + \phi\big(\xi_s^k(x)\big)\Big\}{\mathbf 1}(\xi_s^k(x) \geq a) ds\\
	&\leq t \int \big\{ g(\xi(x)) + \phi\big(\xi^k(x)\big)\big\}{\mathbf 1}(\xi^k(x) \geq a) \nu_{\rho'}(d\xi)\\
	&\leq t \int \big\{g(\xi(x))^2 +\phi\big(\xi^k(x)\big)^2\big\}\nu_{\rho'}(d\xi) \nu_{\rho'}(\xi^k(x) \geq a).
\end{align*} 

The expectation in the last line is bounded in $k$. Moreover, by tha law of large numbers, $\xi^k(x)$ converges to $\rho'$ in probability as $k \to \infty$, and the probability in the last line goes to 0 as $k \to \infty$. Notice that this convergence is uniform in $x$. Therefore, we can introduce the indicatior function $\mathbf 1(\xi_s^k(x)\leq a)$ in (\ref{ec8}).

By assumption, the {\em entropy density} $H(\nu^n|\nu_\rho)/3^n$ is uniformly bounded in $n$, by a constant $\kappa<+\infty$. A simple computation shows that the same is true for $H(\nu^n|\nu_{h})$, the entropy with respect to the invariant measure of the process. It is well known that the entropy of a jump process with respect to the invariant measure is decreasing in time. Fix some reference time $T >0$. Denote by $\bb P_{inv}$ the law of the process $\xi_t$ up to time $T$, speeded up by $5^n$ and starting from the {\em invariant measure} $\nu_h$. Then, there is another constant $\bar \kappa$ depending only on $\kappa$ and $T$, such that $H(\bb P_n| \bb P_{inv})/3^n \leq \bar \kappa$ for any $n > 0$. To simplify the notation, let us define $\mc V_k(\xi,x)$ by
\[
\mc V_k(\xi,x) = \big\{g\big(\xi(x)\big) - \phi\big(\xi^k(x)\big)\big\}
	\mathbf 1(\xi_s^k(x) \leq a).
\]

We have ommited in the notation the dependence of $\mc V_k$ in $n$ and $a$.
By the entropy inequality,
\begin{align*}
\bb E_n\Big| 
	&\int_0^T
	\mc V(\xi_s,x) ds \Big| \leq \\
	&\leq \frac{\bar \kappa}{\gamma} + \frac{1}{\gamma 3^n} \log \bb E_{inv} \Big[\exp\Big\{\gamma 3^n \Big|\int_0^T \mc V_k(\xi_s,x) ds \Big| \Big\} \Big]. \\
\end{align*}

Using the elementary inequality $e^{|x|} \leq e^x+e^{-x}$, we can get rid of the modulus in the previous expression. Therefore, the limit in (\ref{ec8}) will be obtained if we prove that
\[
\lim_{k \to \infty} \limsup_{n \to \infty} \frac{1}{\gamma 3^n} \log \bb E_{inv} \Big[\exp\Big\{\pm\gamma 3^n \int_0^T \mc V_k(\xi_s,x) ds  \Big\} \Big] =0.
\]

By Feynman-Kac's formula, the logarithm of this expectation is bounded by the largest eigenvalue of the operator $5^n L \pm \gamma 3^n\mc V_k$, where the term $\mc V_k$ is understood as a multiplication operator. For simplicity, we will consider just the ``$+$'' sign in $\mc V_k$. By the variational formula for the largest eigenvalue of an operator in $\mc L^2(\nu_h)$, the previous expression is bounded by
\begin{equation}
\label{ec11}
T\sup_{f} \Big\{ \<\mc V_k,f\> - \frac{1}{\gamma}\Big(\frac{5}{3}\Big)^n \<\sqrt{f}, -L_{zr}\sqrt{f}\>\Big\}, 
\end{equation}
where the supremum is over all the densities $f$ with respect to $\nu_h$, and the inner product is with respect to $\nu_h$ as well. Notice first that $\mc V_k$ depends on $\xi(y)$ only through its values for $y \in \mc T_n^k(x)$. Taking the conditional expectation of $f$ with respect to $\mc F(\mc T_n^k(x))$, the $\sigma$-algebra generated by $\{\xi(y); y \in \mc T_n^k(x)\}$, and due to the fact that $\nu_h$ is a product measure, we can restrict the previous supremum to densities in the configuration space $\bb N_0^{\mc T_n^k(x)}$, which is homeomorphic to $\bb N_0^{V_k}$. By positivity of $\<\sqrt{f}, -L_{zr}\sqrt{f}\>$, we can also change $L_{zr}$ by the generator of a zero-range process restricted to the triangle $\mc T_n^k(x)$. We will denote this generator simply by $L$, since no risk of confusion will appear by the fact that $L$ depends on $n,k$ and $x$. 
In this way, we have reduced the initial problem into a problem on a finite graph, which in our case is equal to $V_k$. Moreover,
due to the presence of the indicator function $\mathbf 1(\xi^k(x) \leq a)$ in the definition of $\mc V_k$, we can restrict ourselves to a finite state space, namely $\{\xi \in \bb N_0^{V_k} ; \xi^k \leq a\}$. Notice that at this point, $\xi^k=\xi^k(x)$ does not depend on $x \in V_k$. Since now the supremum is over a compact set, we can exchange the supremum and the limit as $n \to \infty$ to obtain
\[
\limsup_{n \to \infty} \sup_{f} \Big\{ \<\mc V_k,f\> - \frac{1}{\gamma}\Big(\frac{5}{3}\Big)^n \<\sqrt{f}, -L\sqrt{f}\>\Big\}
	= \sup_{\<\sqrt{f}, -L \sqrt{f}\>=0} \<\mc V_k,f\>.
\]

Now we have to identify the densities for which $\<\sqrt{f}, -L \sqrt{f}\>=0$. A simple computation shows that, in this case, $f$ is constant over the sets 
\[
\Omega_{k,l} =\{\xi \in \bb N_0^{V_k}; \sum_{x \in V_k} \xi(x) =l\}.
\]

The restriction $\xi^k \leq a$ imposes $l \leq a |V_k|$. Let us define the measures $\nu_{k,l}$ by taking $\nu_{k,l}(\cdot) = \nu_\rho(\cdot|\xi^k=l|V_k|)$. Notice that these measures do not depend on the value of $\rho$, and they are also exchangeable. Then, the previous supremum is equal to 
\[
\sup_{l \leq a|V_k|} \int \mc V_k(\xi,x) \nu_{k,l}(d\xi).
\]

For $l \leq a|V_k|$, the indicator function $\mathbf 1(\xi^k \leq a)$ is identically equal to 1. Therefore, we are left with
\[
\sup_{l \leq a|V_k|} \int \big\{g(\xi(x)) - \phi(l/|V_k|)\big\}d\nu_{k,l}.
\]

But this last quantity goes to $0$ as $k \to \infty$ by the equivalence of ensembles, which states that the {\em grancanonical} measures $\nu_{k,l}$ approach the {\em canonical} measures $\nu_{l/|V_k|}$, uniformly in compact sets of the real line.

The case on which we take the ``$-$'' sign in front of $\mc V_k$ is totally analogous. In this way we have finished the proof of (\ref{ec8}).
The proof of this theorem when we take $\xi(x) g(\xi(x))$ instead of $g(\xi(x))$ is entirely analogous, and we left it to the interested reader.

\begin{remark}
The same estimate remains true if we consider functions of the form $g(\xi_t^n(x))F(t)$, where $F:[0,T] \to \bb R$ is bounded. In that case the limit is uniform over sets of the form $\{\sup_{t \in [0,T]} |F(t)| \leq K\}$. It is enough to replace the variational formula in (\ref{ec11}) by $\int_0^T \lambda_n(t)dt$, where $\lambda_n(t)$ is the largest eigenvalue of the operator $5^n L \pm \gamma 3^n \mc V_{k,t}$.
\end{remark}

\subsection{Proofs of Lemma \ref{l1} and Lemma \ref{l2}}

Now we are in position to prove Lemma \ref{l1}. By the one-block estimate, we can write
\begin{multline*}
\bb E_n \int_0^t \frac{2}{3^n} \sum_{x \in V_n} \mc F(\xi_s^n(x),u_s^n(x))ds = \\
	=\bb E_n \int_0^t \frac{2}{3^n}\sum_{x \in V_n} \big(\phi(\xi_s^k(x)) -\phi(u_s^n(x))\big)\big(\xi_s^k(x)-u_s^n(x)\big)ds
\end{multline*}
plus a rest that vanishes as $n \to \infty$ and then $k \to \infty$. But this last term is always positive, which proves Lemma \ref{l1}.

In order to prove Lemma \ref{l2}, we start proving that $\Delta_n \mc G(x) /3^n$ is uniformly bounded. Remember that $\mc G(x,x) \geq \mc G(x,y)$ for any $x,y \in K$. Since $\Delta_n \mc G(x,x) =-3^n$, we conclude that $\mc G(x,x) \leq \mc G(x,y) +3^n$ for $x$, $y$ in $V_n$ with $y \sim_n x$. Therefore,
\begin{align*}
\mc G(x,y) 
	&\leq \mc G(x,x) \leq \mc G(x,y),\\
\mc G(x,y) 
	&\leq \mc G(y,y) \leq \mc G(x,y),\\
\end{align*}
where we have obtained the second line by interchanging the roles of $x$ and $y$. We conclude that $|\mc G(x) -\mc G(y)| \leq 3^n$ for $x \sim_n y$, and $|\Delta_n \mc G(x)/3^n| \leq 4$. Now we are able to use the one-block estimate to rewrite the expectation in Lemma \ref{l2} as
\[
\bb E_n \int_0^t\frac{1}{3^{2n}} \sum_{x \in V_n} g\big(\xi_s(x)\big)\Delta_n \mc G(x) ds=
	\bb E_n \int_0^t\frac{1}{3^{n}} \sum_{x \in V_n} \phi\big(\xi_s^k(x)\big)\Delta_n \mc G(x)/3^n ds	
\]
plus a rest that vanishes as $n \to \infty$ and then $k \to \infty$.
Notice that the function $\phi(\xi_s^k(x))$ is constant in $\mc T_n^k(x)$. Therefore, we can integrate by parts (in this discrete context just a summation) the function $\Delta_n \mc G(x)$ in $\mc T_n^k(x)$ to obtain that the previous expression is equal to
\[
\bb E_n \int_0^t\frac{1}{3^{n}} \frac{1}{|V_k|}\sum_{x \in V_n} \phi\big(\xi_s^k(x)\big)\sum_{i=0,1,2} \partial_{n,k}^i \mc G(x)ds,
\]
where the symbols $\partial_{n,k}^i G(x)$ denote the {\em outer} normal derivative of $\mc G(x)$ computed on the three vertices $a_i^{n,k}(x)$ of the triangle $\mc T_n^k(x)$, defined by
\[
\partial_{n,k}^i G(x) = (5/3)^n \sum_{\substack{y \sim_n a_i^{n,k}(x)\\y \notin \mc T_n^k(x)}} \big\{\mc G(y)-\mc G(a_i^{n,k}(x))\big\}.
\]

Notice that the ordering of the three vertices $a_i^{n,k}(x)$ is not relevant here. We have also gained a factor $|V_k|^{-1}$ in this integral. Now we just need to prove that the normal derivatives of $\mc G(x)$ defined in this way are uniformly bounded. But the same arguments used to bound $\Delta_n \mc G(x)$ can be repeated here, to get a bound of the form $|\partial^i_{n,k} \mc G(x)| \leq 2$ for any $x \in V_{n-k}$ and any $k \leq n$.

\appendix

\section{The behavior of the Green function at the diagonal}
\label{s5}

\begin{figure}
\centering
\vspace{-100pt}
\begin{picture}(250,250)
	\includegraphics{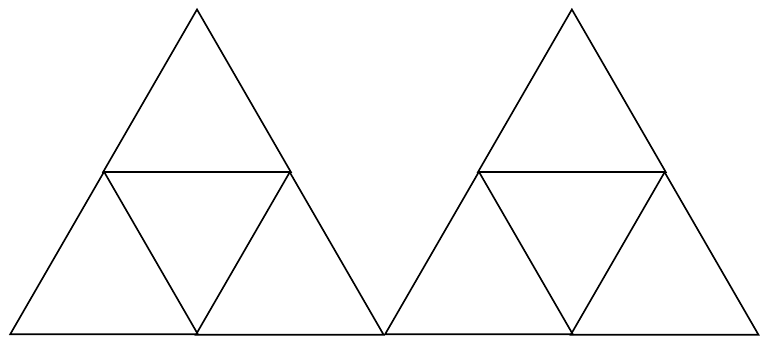}
	\put(-390,0){$x_1$}
	\put(-445,0){$y_2$}
	\put(-500,0){$x_0$}
	\put(-555,0){$y_2'$}
	\put(-610,0){$x_1'$}
	\put(-590,60){$y_0'$}
	\put(-515,60){$y_1'$}
	\put(-480,60){$y_1$}
	\put(-405,60){$y_0$}
	\put(-555,115){$x_2'$}
	\put(-445,115){$x_2$}
	\put(-545,-15){Figure A: A neighborhood of $x_0$}
\end{picture}
\vspace{10pt}	
	\label{fig1}
\end{figure}

In this Appendix we study the behavior of the function $\mc G(x)$. We want to compute $\Delta_n \mc G(x) /3^n$ for $x \in V_n$ and we want to obtain its asymptotic behavior. The idea is to obtain an iterative formula for $\Delta_{n+1} \mc G(x)$ in terms of the values of $\mc G(x,y)$ for sites $y$ in the neighborhood of $x$. 
In Fig. 1 we have taken a point $x_0 \in V_n$ and we have named $x_i$, $x_i'$, $i=1,2$ each of the four neighbors of $x_0$ in $V_n$. We have also drawn points $y_i$, $y_i'$, $i=0,1,2$, which correspond to the points in $V_{n+1}$ belonging to the two triangles of side $2^{-n}$ in $V_n$, meeting at $x_0$. We claim that $\mc G(y_i,x_j)$ can be computed in terms of the six numbers $G^n_{ij} = \mc G(x_i,x_j)$ (there are 9 combinations for $i$, $j$, but remeber that $G^n_{ij}=G^n_{ji}$). In fact, let us construct $\mc G(y,y_0)$ for $y \in V_{n+1}$, for example. Defining
\[
\mc G_0(y,y_0) =
\begin{cases}
3/10, &y=y_0\\
1/10, &y=y_1, y_2\\
0, &\text{otherwise, }
\end{cases}
\]
we see that $\Delta_{n+1} (3/5)^{n+1} \mc G_0(y,y_0) = -3^n \delta(y,y_0)$, except for $y =x_i$, $i=0,1,2$. Therefore, the true Green function $\mc G(y,y_0)$ is a linear combination between $\mc G_0(y,y_0)$ and $\mc G(y,x_i)$, $i=0,1,2$. In fact, $\Delta_{n+1} (3/5)^{n+1} \mc G_0(x_i,y_0)= 3^{n+1}h^0(x_i)$, where
\[
h^0(x_i)=
\begin{cases}
1/5, &i = 0\\
2/5, &i=1,2.
\end{cases}
\]

Therefore, we have the following formula for $\mc G(y,y_0)$:
\[
\mc G(y,y_0) = \Big(\frac{3}{5}\Big)^{n+1} \mc G_0(y,y_0) + \sum_{i=0,1,2} h^0(x_i) \mc G(y,x_i).
\]

Notice, as well, that $\mc G(y_i,x_j)$ can be computed from $G_{ij}^n$ by harmonic continuation:
\[
\mc G(y_i,x_j) = \frac{1}{5}\Big\{ 2\sum_{k=0,1,2} G_{kj}^n -G_{ij}^n\Big\}.
\]

Combining these two formulas, we can obtain the values of $\mc G(y_i,y_j)$ in terms of the numbers $\{G_{ij}^n\}$ for any $i,j$. We give the formulas for $\mc G(y_0,y_0)$, $\mc G(y_1,y_0)$; the other formulas can be obtained by cyclic permutations of $\{0,1,2\}$:
\[
\mc G(y_0,y_0) = \frac{3}{10} \Big(\frac{3}{5}\Big)^{n+1} + \frac{1}{25} \big\{ G_{00}^n + 4G_{11}^n +4 G_{22}^n + 4G_{01}^n +8G_{12}^n +4G_{20}^n\big\}, 
\]
\[
\mc G(y_0,y_1) = \frac{1}{10} \Big(\frac{3}{5}\Big)^{n+1} + \frac{1}{25} \big\{ 2G_{00}^n + 2G_{11}^n +4 G_{22}^n + 5G_{01}^n +6G_{12}^n +6G_{20}^n\big\}.
\]

Of course, similar formulas hold for the {\em left-side points} in Figure \ref{fig1}. As an application of these formulas, we can try to compute $\Delta_{n+1} \mc G(x_0)$ in terms of $\Delta_n \mc G(x_0)$:

\begin{multline}
\label{ec10}
5^{n+1}\Big(\mc G(y_1,y_1)+\mc G(y_2,y_2) - G_{00}^n\Big) 
	= 3^{n+1} \cdot\frac{3}{5}+\\
	  +5^{n-1}\Big\{ 5(G_{11}^n+G_{22}^n-2G_{00}^n)+12(G_{01}^n+G_{02}^n-G_{00}^n) + 8(G_{12}^n-G_{00})\Big\}.
\end{multline}

Adding the symmetric term coming from the left-hand side of Figure \ref{fig1}, we see that
\begin{multline*}
\Delta_{n+1} \mc G(x_0) = 3^{n+1}\cdot \frac{6}{5} + \Delta_n \mc G(x_0) + \frac{12}{5} \Delta_n \mc G(x_0,x_0)\\
	 + 5^n \cdot \frac{8}{5} \big\{\mc G(x_1,x_2)+\mc G(x_1',x_2')-2\mc G(x_0,x_0)\big\}.
\end{multline*}

In this expression, there is a new term appearing. Let us define 
\[
\Gamma_n(x_0) = 5^n \big\{\mc G(x_1,x_2)+\mc G(x_1',x_2')-2\mc G(x_0,x_0)\big\}.
\]

Since $\Delta_n \mc G(x_0,x_0) = -3^n$, we have the formula
\[
\Delta_{n+1} \mc G(x_0) = 3^{n+1}\cdot \frac{2}{5} + \Delta_n \mc G(x_0) + \frac{8}{5}\Gamma_n(x_0).
\]

Now let us compute $\Gamma_{n+1}(x_0)$:
\[
\Gamma_{n+1}(x_0) = -3^{n+1}\cdot \frac{1}{5} + \frac{2}{5}\Delta_n \mc G(x_0) + \Gamma_n(x_0).
\]

This establish a linear recurrence formula for the pair $\{\Delta_n \mc G(x_0), \Gamma_n(x_0)\}$. Due to the factor $3^{n+1}$, it is natural to define $a_n = \Delta_n \mc G(x_0)/3^n$, $b_n = \Gamma_n(x_0)/3^n$. For $a_n$, $b_n$, the recurrence formula reads
\begin{align*}
a_{n+1} &= 2/5 + a_n/3 + 8b_n/15\\
b_{n+1} &= -1/5+ 2a_n/15 +b_n/3.\\
\end{align*}

This recursion formula can be written in vectorial terms as $\mathbf a_{n+1} = \mathbf w + M \mathbf a_n$, with \[
M= \left(
\begin{array}{cc}
1/3 &8/15\\
2/15 &1/3\\
\end{array}
\right).
\]

The eigenvalues of this matrix are $\lambda_1=3/5$, $\lambda_2=1/15$. In particular, for any initial value of $a_n$, $b_n$ (remember that $x_0 \in V_n$, so the sequence does not start at $n=1$) we have convergence to a unique fixed point, given by $\mathbf a = (I-M)^{-1} \mathbf w$. In our case, $\mathbf w = (2/5, -1/5)$, and $\mathbf a=(3/7,-3/14)$. In particular,
\[
\lim_{n \to \infty} \Delta_n \mc G(x_0)/3^n = 3/7.
\]

Notice that the limit is {\em positive} and {\em does not depend} on $x_0$. Remember that $\Delta_n \mc G(x_0,x_0)/3^n = -1$. This remarkable fact shows the high irregularity of the function $\mc G(x)$. In contrast to it, for the unit interval $[0,1]$, $\mc G(x) = x(1-x)$, so $\mc G(x)$ is smooth and concave.

In order to obtain a recursive formula for the partial derivatives of $\mc G(x)$, we focus our attention on the right side of Figure \ref{fig1}. Let us take a look at equation (\ref{ec10}).  We see that most of the work has already been done. In fact, defining 
\begin{align*}
\alpha_n 
	&= (5/3)^n(G_{11}^n+G_{22}^n-2G_{00}^n)\\
\beta_n
	&= (5/3)^n(G_{12}^n-G_{00}^n)\\
\gamma_n
	&= (5/3)^n(G_{01}^n+G_{20}^n-2G_{00}^n),
\end{align*}
we obtain the following recursion formulas:
\begin{align*}
\alpha_{n+1}
	&=3/5+1/3 \alpha_n + 8/15 \beta_n + 4/5 \gamma_n\\
\beta_{n+1}
	&=1/10+ 2/15 \alpha_n +1/3 \beta_n + 2/5 \gamma_n\\
\gamma_{n+1} 
	&= \gamma_n.
\end{align*}

In particular, we obtain the same linear recursion as before, but now with a different vector $\mathbf w = (3/5 +4/5 \gamma,1/10+2/5 \gamma)$. Again, we have convergence of $(\alpha_n,\beta_n)$ to the solution of $\mathbf a = \mathbf w +M \mathbf a$. In our case, $\lim_n \alpha_n =\alpha$, with $\alpha = 17/14 +2\gamma$, where $\gamma=\gamma_n$ for some suitable $n$. Notice that we recover our previous computation $\Delta_n \mc G(x)/3^n \to 3/7$ by noticing that the Laplacian at $x$ is the sum of the two partial derivatives at $x$, and that the corresponding $\gamma$'s add up to $-1$ in that case. 

We see that the behavior of $\Delta_n \mc G(x)$ is the {\em worst possible}, in the sense that in one hand we have the trivial bound $\Delta_n \mc G(x)/3^n \leq c_1$ for any $n$ and $c_1=4$, and in the other hand we have the lower bound $\Delta_n \mc G(x)/3^n \geq c_2$ for any $c_2 \leq 3/7$ and any $n$ large enough.

\section*{Acknowledgements}
M.J. was supported by the Belgian Interuniversity Attraction Poles Program P6/02, through the network NOSY (Nonlinear systems, stochastic processes and statistical mechanics).

\end{document}